\numberwithin{equation}{section}
\newtheorem{proposition}{Proposition}[section]
\newtheorem{example}{Example}
\newtheorem*{example*}{Example}
\newcommand{\BS}{\boldsymbol}
\newcommand{\Rmnum}[1]{\expandafter\@slowromancap\romannumeral #1@}
\journal{}
\def\ps@pprintTitle{%
   \let\@oddhead\@empty
   \let\@evenhead\@empty
   \def\@oddfoot{\reset@font\hfil\thepage\hfil}
   \let\@evenfoot\@oddfoot
}
\begin{document}

\begin{frontmatter}
\author[a]{Helmi Shat\corref{cor1}}
\cortext[cor1]{Corresponding author}
\ead{hshat@ovgu.de}
\author[a]{Rainer Schwabe}
\ead{rainer.schwabe@ovgu.de}
\address[a]{\small Institute for Mathematical Stochastics, Otto-von-Guericke University Magdeburg, \\ \small Universit\"atsplatz 2,
39106 Magdeburg, Germany }

 \title{Experimental Designs for Accelerated
Degradation Tests Based on Linear Mixed Effects Models}

\begin{abstract}
Accelerated degradation testing has considerable significance in reliability engineering due to its ability to provide accurate estimation of lifetime charachteristic of highly reliable systems within a relatively short testing time period.
The measured date from particular experiments at high stress conditions are extrapolated, through a technically reasonable statistical model, to obtain estimates of certain reliability properties under normal use levels.
In this work we consider  repeated measures accelerated degradation tests with multiple stress variables, where the degradation paths are assumed to follow a linear mixed effects model which is quite common in settings when repeated measures are made.
We derive optimal experimental designs for minimizing the asymptotic variance for estimating the median failure time under normal use conditions when the time points for measurements are fixed in advance.  
\end{abstract}

\begin{keyword}
Accelerated degradation test\sep linear mixed effects model\sep  failure time distribution\sep locally optimal design\sep destructive testing.

\end{keyword}

\end{frontmatter}

\section{Introduction}
\label{sec-introduction}

Industrial needs for sustainable and highly reliable systems have motivated corresponding manufacturers to design and manufacture products that can operate without failure for years or even decades. 
As a consequence, manufacturers are demanded to provide their customers with accurate information about the reliability of their products. 
However, when the products get more reliable, it becomes more difficult or even unfeasible to assess a sufficient amount of lifetime data on the basis of traditional reliability testing in order to accurately estimate characteristics of the lifetime distribution of the products because failure or fatigue can hardly be observed under normal use conditions in a reasonable time period for testing.
As an alternative, for highly reliable and enduring products, Accelerated Degradation Tests can be utilized to provide sufficient information on the deterioration of the products to obtain a sufficiently accurate estimate of lifetime properties within a relatively short testing time period. 
In Accelerated Degradation Testing products are tested at various elevated stress levels (for e.\,g.\ temperature, voltage, or vibration).
The resulting data are then extrapolated, through a technically reasonable statistical model, to obtain estimates of lifetime characteristics under normal use conditions. 
The precision of the estimates is influenced by several factors, such as the number of units tested, the duration of the testing period, the frequency of measurements, and, on particular, the choice of the stress levels to which the units are exposed. 

A vast amount of literature is devoted to the analysis of Accelerated Degradation Tests, see, for example, \cite{meeker2014statistical} for a comprehensive survey on various approaches in the literature used to assess reliable information from degradation data.
In more detail, \cite{bagdonavicius2001accelerated} present models and methods of statistical analysis for Accelerated Degradation Tests and further references can be found there. 
As additional sources, \cite{1505040} provides an extensive list of references related to accelerated test planning and \cite{limon2017literature} review prominent methods for statistical inference and optimal design of accelerated testing plans. 
There different types of test planning strategies are categorized according to their merits and drawbacks and research trends are provided. 
\cite{li2006design} presents an analytical method for the optimum planning of Accelerated Degradation Tests with an application to the reliability of Light-Emitting Diodes. 
There the author states that the variability of the measured units have a substantial impact on the accuracy of estimation. 
Therefore these random effects should be encountered in the choice of the experimental settings for the Accelerated Degradation Tests. 
Based on the observation that ignoring the variability in the normal use conditions may lead to significant prediction errors, \cite{liao2006reliability} extend Accelerated Degradation Test models to predict field reliability by considering variations in the stress levels by considering a degradation process  represented by a Brownian motion with linear drift via a stochastic differential equation. 
\cite{wang2017optimal} propose a $M$-optimality criterion for designing constant stress Accelerated Degradation Tests when the degradation path can be represented by an inverse Gaussian process with covariates and random effects. 
This criterion focuses on a degradation mechanism equivalence rather than on the evaluation precision or the prediction accuracy which are usually employed in traditional optimization criteria. 
Those authors prove that, with a slightly relaxed requirement of prediction accuracy, the obtained optimum designs minimize the dispersion of the estimated acceleration factor between the normal stress level and a higher accelerated stress level. 
Wiener processes (Brownian motions) are intensively used to represent degradation paths in Accelerated Degradation Testing, see \cite{ye2015new} and \cite{guan2016objective}. 
For instance, \cite{lim2011optimal} develop optimal Accelerated Degradation Test plans assuming that the constant stress loading method is employed and the degradation characteristics follows a Wiener process. 
These authors determine the test stress levels and the proportion of test units allocated to each stress level such that the asymptotic variance of the maximum likelihood estimator of a particular quantile of the lifetime distribution at the normal use condition is minimized. 
In addition, compromise plans are also developed for checking the validity of the relationship between the model parameters and the stress variable.
In a case study for random effects in degradation of semiconductors, \cite{lu1997statistical} propose a repeated measurements model with random regression coefficients and a standard deviation function for analyzing linear degradation data. 
The authors utilize several large sample interval estimation procedures to estimate the failure time distribution and its quantiles. 

On the other hand, the general theory of optimal design of experiments is well developed in the mathematical context of approximate designs which allow for analytical solutions (see e.\,g.\ \cite {silvey1980optimal} or \cite{d1f763d8224548d6a7da9f5719fdc086}). 
In addition, \cite{schwabe1996optimum} deals with the theory of optimal designs for multi-factor models which can be used here to treat more than one stress variable and the choice of time plans simultaneously under various interaction structures. 
In the presence of random effects, \cite{entholzner2005popdesign} derive that for single samples the optimal designs for fixed effects models retain their optimality for linear optimality criteria.
\cite{DEBUSHO20081116} show that this also holds for $D$-optimality in linear models when only the intercept is random.
However, in a multi-sample situation \cite{Schmelter2007} and \cite{schmelter2008optimal} exhibit that the variability of the intercept has a non-negligible influence on the $D$-optimal design. 
In the case of random slope effects this dependence already occurs in single samples as outlines by \cite{randomslope2007}.
\cite{dette2010optimal} consider the problem of constructing $D$-optimal designs for linear and  nonlinear random effect models with applications in population
pharmacokinetics. 
These authors present a new approach to determine efficient designs for nonlinear least squares estimation which addresses the problem of additional correlation between observations within units.
Based on geometrical arguments, \cite{grasshoff2012optimal} derive $D$-optimal designs for random coefficient regression models when only one observation is available per unit, a situation which occurs in destructive testing. 
\cite{mentre1997optimal} present an approach to optimal design of experiments for random effects regression models in the presence of cost functions related to costs per unit and costs per measurement with applications to toxicokinetics. 

The present approach is based on the discussion paper by \cite{doi:10.1002/asmb.2061} in which two case studies are introduced for optimal planning of repeated measures Accelerated Degradation Tests. 
There the authors consider the influence of a single stress variable and use a criterion based on a large-sample approximation of the precision for estimating a quantile of the failure-time distribution under normal use conditions. 
We will adopt this approach, generalize the results presented there to more general models, and extend the design optimization also to generate an optimal time plan. 

The present paper is organized as follows. 
Section~\ref{sec-intro-exmpl} starts with a motivation example based on a case study in \citep{doi:10.1002/asmb.2061}. 
In Sections~\ref{sec-model-formulation}, \ref{sec-estimation} and \ref{sec-information} we state the general model formulation, specify the maximum-likelihood estimation and exhibit the corresponding information matrix.
Basic concepts of optimal design theory in the present context are collected in Section~\ref{sec-design} while Section~\ref{sec-failuretime} is devoted to the idea of soft failure due to degradation, where we derive the design optimality criterion for estimating a quantile of the failure time distribution under normal use conditions.
In Section~\ref{sec-od-fixed-tau} optimal designs are characterized when the time plan for repeated measurements at the testing units is fixed in advance.
The paper closes with a short discussion in Section~\ref{sec-discussion}.

\section{Introductory example}
\label{sec-intro-exmpl}
Before formulating our general degradation model in section \ref{sec-model-formulation}, we start in this section for motivation with the description of a simple introductory example based on \citep{doi:10.1002/asmb.2061}.

\begin{example}
	\label{ex-intro}
The model proposed in \cite{doi:10.1002/asmb.2061} is a linear mixed effect model with a single stress variable $x$. 
In this model there are $n$ testing units for which degradation $y_{i j}$ is observed at $k$ time points $t_j$, $j = 1, ..., k$.
The (standardized) stress variable $x$ can be chosen by the experimenter from the design region $\mathcal{X} = [0, 1]$.
On the unit level the response $y_{i j}$ for the degradation of testing unit $i$ at time $t_j$ is represented by
\begin{equation} 
	\label{motivation}
	y_{i j} = 	\beta_{i, 1} + \beta_{2} x_i + \beta_{i, 3} t_j + \beta_{4} x_i t_j + \varepsilon_{i j} ,
\end{equation}
where the intercept $\beta_{i, 1}$ is the mean degradation of unit $i$ at time $t = 0$ under the stress level $x = 0$, $\beta_{2}$ is the common (not unit specific) mean increase in degradation depending on the stress variable $x$, $\beta_{i, 3}$ is the mean increase in degradation of unit $i$ over time $t$ when the stress level is set to $x = 0$, and $\beta_{4}$ is the interaction effect between time and stress.
The measurement errors $\varepsilon_{i j}$ are assumed to be realizations of a normally distributed error variable with mean zero and error variance $\sigma^2_\varepsilon$. 

On the whole experiment level the unit specific parameters $(\beta_{i, 1}, \beta_{i, 3})^T$ of the units are assumed to be realizations of a bivariate normal distribution with mean $(\beta_{1}, \beta_{3})^T$ and a variance covariance matrix $\BS{\Sigma} =
\left(\begin{array}{cc}
	\sigma_1^2 & \rho \sigma_1 \sigma_2
	\\
	\rho \sigma_1 \sigma_2 & \sigma_2^2
\end{array}\right)$.
All random effect parameters and measurement errors are assumed to be independent both within as well as between units.
Under well controlled measuring testing conditions, the variability of the response is completely described by both the unit to unit variability $\BS\Sigma$ and the within unit variability of the measurement errors. 

To illustrate the situation some virtual degradation paths $y_{i j}$, $j = 1, ..., k$, are depicted in Figure~\ref{spaghettiplot} (left panel) for three different values of the stress variable $x$.
There are three units shown at each value of the stress level ($n = 9$) and $k=11$ equally spaced measurement times $t_j$.
The roughness of the paths is due to the measurement errors $\varepsilon_{i j}$
The corresponding underlying mean degradation paths $\mu_i(x_i, t) = \beta_{i, 1} + \beta_{2} x_i + \beta_{i, 3} t + \beta_{4} x_i t$, corrected for the measurement errors, are shown in the right panel of Figure~\ref{spaghettiplot}.
These mean degradation paths are represented by straight lines over time, where both the intercept and the slope may vary across units around an aggregate value determined by the value $x_i$ of the stress variable.
\begin{figure}
	\label{spaghettiplot}
	\centering
		\includegraphics[width=0.4\textwidth]{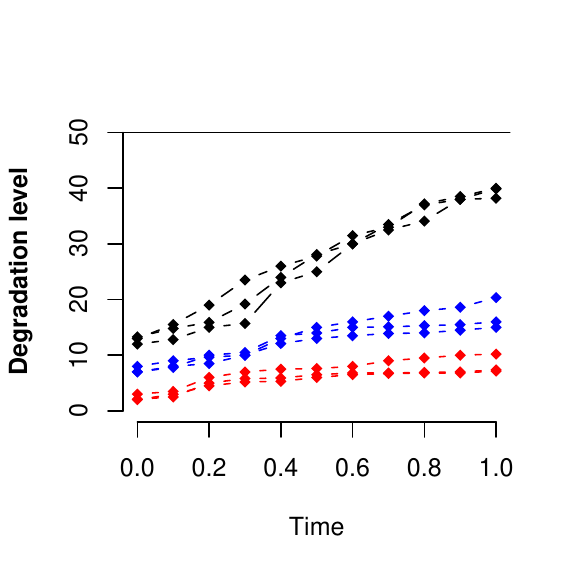}
		\hspace{15mm}
		\includegraphics[width=0.4\textwidth]{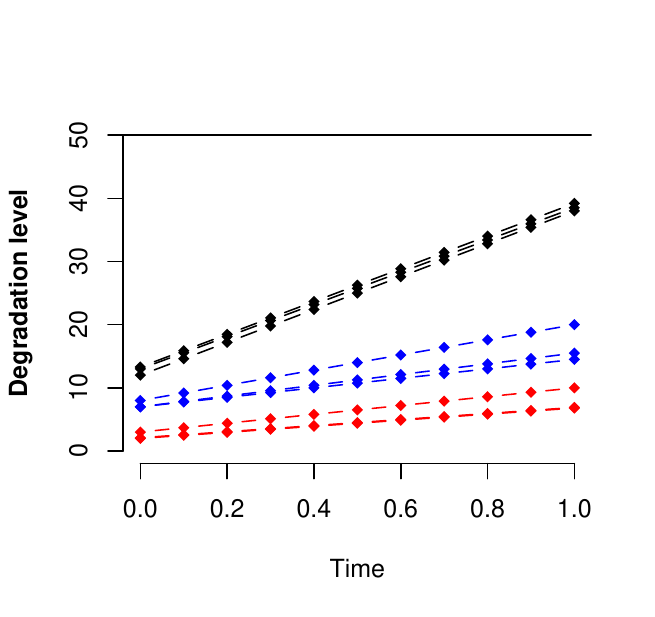}
	\caption{Observed degradation paths (left panel) and corresponding mean degradation paths (right panel)}
\end{figure}
The corresponding time $t_{u}$, for which $\mu_u(x_u, t_{u}) = y_0$, will be called the failure time of unit $u$ under normal use condition due to degradation.
These failure times vary across different unit because of the unit specific parameters $\beta_{u, 1}$ and $\beta_{u, 3}$.

In both panels of Figure~\ref{spaghettiplot} the predetermined failure threshold $y_0 = 50$ is indicated by a horizontal line.
As typical for degradation studies failure does not occur during the time of experiment even for the highest stress level.
\end{example}

\section{Formulation of the model}
\label{sec-model-formulation}

In this section, we give a general formulation of a mixed effects regression model incorporating a product-type structure with complete interactions between the stress and the time variable. 
To become more specific we assume that there are $n$ testing units $i=1,...,n$, for which degradation $y_{i j}$ is to be measured at $k$ subsequent time points $t_j$, $j = 1, ..., k$, $t_1 < ... < t_k$.
Each unit $i$ is observed under a value $\mathbf{x}_i$ of the stress variable(s), which is kept fixed for each unit throughout the  degradation process, but may differ from unit to unit.
The number $k$ of measurements and the time points are the same for all units.  
The measurements $y_{i j}$ are regarded as realizations of random variables $Y_{i j}$ which are described by a hierarchical model. 
For each unit $i$ the observation $Y_{i j}$ at time point $t_j$ is given by
\begin{equation} 
\label{modelindividualresponse}
Y_{i j} = \mu_{i}(\mathbf{x}_i, t) + \varepsilon_{i j} ,
\end{equation}
where $\mu_{i}(\mathbf{x}, t)$ is the mean degradation of unit $i$ at time $t$, when stress $\mathbf{x}$ is applied to unit $i$, and $\varepsilon_{i j}$ is the associated measurement error at time point $t_j$.
The mean degradation $\mu_{i}(\mathbf{x}, t)$ is assumed to be given by a linear model equation in the stress variable $\mathbf{x}$ and time $t$,
\begin{equation} 
	\label{mean-degradation-unit}
	\mu_{i}(\mathbf{x}, t) = \sum_{r = 1}^{p} \beta_{i, r} f_{r}(\mathbf{x}, t_j) 
	= \mathbf{f}(\mathbf{x}, t_j)^{T} \BS{\beta}_i  
\end{equation}
where $\mathbf{f}(\mathbf{x}, t) = (f_{1}(\mathbf{x}, t), ..., f_{p}(\mathbf{x}, t))^T$ is a $p$-dimensional vector of known regression functions $f_{q}(\mathbf{x}, t)$ in both the stress variable(s) $\mathbf{x}$ and the time $t$, $\BS{\beta}_i = (\beta_{i, 1}, ..., \beta_{i, p})^T$ is a $p$-dimensional vector of unit specific parameters $\beta_{i, q}$.
Hence, the response is given by
\begin{equation} 
	\label{modelindividuallevel}
	Y_{i j} = \mathbf{f}(\mathbf{x}_{i}, t_j)^{T}\BS{\beta}_i + \varepsilon_{i j}.
\end{equation}
The measurement error $\varepsilon_{i j}$ is assumed to be normally distributed with zero mean and some potentially time dependent error variance $\sigma_{\epsilon j}^{2}$ ($\varepsilon_{i j} \sim \mathrm{N}(0,\sigma_{\epsilon j}^{2})$).
Moreover, the error terms may be correlated within a unit over time. 
So, in general the vector $\BS{\varepsilon}_{i} = (\varepsilon_{i 1}, ..., \varepsilon_{i k})^T$ of errors associated with the $k$ observations within one unit $i$ is $k$-dimensional multivariate normally distributed with zero mean and positive definite variance covariance matrix $\BS{\Sigma}_\varepsilon$ ($\BS{\varepsilon}_i \sim \mathrm{N}(\mathbf{0}, \BS{\Sigma}_\varepsilon)$).
For the regression functions $\mathbf{f}(\mathbf{x},t)$ we suppose a product-type structure with complete interactions between the stress variable $\mathbf{x}$ and the time $t$, i.\,e.\ there are marginal regression functions $\mathbf{f}_1(\mathbf{x})=(f_{1 1}(\mathbf{x}), ..., f_{1 p_1}(\mathbf{x}))^T$ and $\mathbf{f}_2(t)=(f_{2 1}(t), ..., f_{2 p_2}(t))^T$ of dimension $p_1$ and $p_2$ which only depend on the stress variable $\mathbf{x}$ and the time $t$, respectively, and the vector $\mathbf{f}(\mathbf{x}, t) = \mathbf{f}_1(\mathbf{x}) \otimes \mathbf{f}_2(t)$ of regression functions factorizes into its marginal counterparts ($p = p_1 p_2$).
Here ``$\otimes$'' denotes the Kronecker product of matrices or vectors.
Then the observation $Y_{i j}$ can be written as
\begin{equation} 
	\label{modelindividuallevel-product}
	Y_{i j} = \sum_{r = 1}^{p_1} \sum_{s = 1}^{p_2} \beta_{i, rs} f_{1 r}(\mathbf{x}_{i}) f_{2 s}(t_j) + \varepsilon_{i j}
	= (\mathbf{f}_1(\mathbf{x}) \otimes \mathbf{f}_2(t))^T \BS{\beta}_i + \varepsilon_{i j},
\end{equation}
where for notational convenience the entries of the vector $\BS{\beta}_i = (\beta_{i, 1 1}, ..., \beta_{i, 1 p_2}, ..., \beta_{i, p_1 p_2})$ of parameters are relabeled lexicographically according to their associated marginal regression functions ($q = (r - 1) p_2 + s$, $r = 1, ..., p_1$, $s = 1, ..., p_2$).
Moreover, 
we will assume throughout that the marginal regression function $\mathbf{f}_1(\mathbf{x}) = (f_{11}(\mathbf{x}), ..., f_{1 p_1}(\mathbf{x}))^T$ of the stress variable $\mathbf{x}$ contains a constant term, $f_{1 1}(\mathbf{x}) \equiv 1$ say, which is a common assumption in the majority of situations, and that only the leading $p_2$ parameters $\beta_{i, 1 1}, ..., \beta_{i, 1 p_2}$ of $\BS{\beta}_i$ associated with this constant term are unit specific. 
All other parameters in $\BS{\beta}_i$ are assumed to take the same value $\beta_{rs}$, $r = 2, ..., p_1$, $s = 1, ..., p_2$, for all individuals $i = 1, ..., n$.
Hence, for unit $i$ the model \eqref{modelindividuallevel-product} can be rewritten as
\begin{equation} 
\label{eq-general-model}
 Y_{i j} = \left(\mathbf{f}_1(\mathbf{x}_{i}) \otimes \mathbf{f}_2(t_j)\right)^{T}\BS{\beta} + \mathbf{f}_2(t_j)^T\BS{\gamma}_i + \varepsilon_{i j},
\end{equation}
where $\BS{\beta} = (\beta_{1 1}, ..., \beta_{p_1 p_2})^T$ is the vector of fixed effect (aggregate) parameters (averaged over the units) associated with the constant term in the regression functions of the stress variable $\mathbf{x}$ and $\BS{\gamma}_i = (\gamma_{i 1}, ..., \gamma_{i p_2})^T$ is the $p_2$-dimensional vector of unit specific deviations $\gamma_{i s} = \beta_{i, 1 s} - \beta_{1 s}$, $s = 1, ..., p_2$, from the corresponding aggregate parameters.
On the aggregate level it is assumed that the units are representatives of a larger entity.
The deviations of the units from the aggregate value are then modeled as random effects, i.\,e.\ they are $p_2$-dimensional multivariate normal with zero mean and variance-covariance matrix $\BS{\Sigma}_\gamma$ ($\BS{\gamma}_i \sim \mathrm{N}(\mathbf{0}, \BS{\Sigma}_\gamma)$).
All vectors $\BS{\gamma}_i$ of random effects and all vectors $\BS{\varepsilon}_i$ of measurement errors are assumed to be independent.
In vector notation the $k$-dimensional vector $\mathbf{Y}_{i} = (Y_{i 1},...,Y_{i k})^T$ of observations for unit $i$ can be expressed as 
   \begin{equation*} 
\label{eq-gen-mod-i}
\mathbf{Y}_{i} = (\mathbf{f}_{1}(\mathbf{x}_{i})^T \otimes \mathbf{F}_2) \BS\beta + \mathbf{F}_2\,\BS{\gamma}_i + \BS\varepsilon_{i} ,
\end{equation*}
where $\mathbf{F}_2 = ( \mathbf{f}_2(t_1), ..., \mathbf{f}_2(t_k))^T$ is the $k \times p_2$ marginal design matrix for the time variable. 
Then $\mathbf{Y}_i$ is $k$-dimensional multivariate normally distributed with mean $(\mathbf{f}_{1}(\mathbf{x}_{i})^T \otimes \mathbf{F}_2) \BS\beta$ and variance covariance matrix $\mathbf{V}=\mathbf{F}_2{\BS{\Sigma}}_\gamma\mathbf{F}_2^T + {\BS{\Sigma}}_{\varepsilon}$.
The variance covariance matrix $\mathbf{V}$ is not affected by the choice of the stress level $\mathbf{x}_i$ and, hence, equal for all units $i$.
In total, for the observations of all $n$ units the stacked $n k$-dimensional response vector $\mathbf{Y}=(\mathbf{Y}_{1}^T, ... , \mathbf{Y}_{n}^T)^T$ can be represented in matrix notation as
\begin{equation}
   	\label{full-model} 
	\mathbf{Y} = ({\mathbf{F}_1 \otimes \mathbf{F}_2})\BS\beta + (\mathbf{I}_n \otimes {\mathbf{F}_2})\,{\BS\gamma} + \BS\varepsilon ,
\end{equation} 
where $ \mathbf{F}_1 = (\mathbf{f}_1(\mathbf{x}_1), ..., \mathbf{f}_1(\mathbf{x}_n))^{T}$ is the $n \times p_1$ marginal design matrix for the stress variables across units, ${\BS\gamma} = ({\BS\gamma}_1^T, ..., {\BS\gamma}_n^T )^T$ is the $n p_2$-dimensional stacked parameter vector of random effects and ${\BS\varepsilon} = ({\BS\varepsilon}_1^T, ..., {\BS\varepsilon}_n^T )^T$ is the $n k$-dimensional stacked vector of random errors.
Such a model equation is sometimes called the ``marginal model'' for the response $\mathbf{Y}$, but should not be confused with models marginalized for the covariates $\mathbf{x}$ and $t$, respectively (see the decomposition at the end of Section~\ref{sec-information}). 
Note that the vectors ${\BS\gamma}\sim \mathrm{N}(\mathbf{0},\mathbf{I}_n\otimes {\BS{\Sigma}}_\gamma)$ of all random effects and the vector $\BS \varepsilon \sim \mathrm{N}(\mathbf{0},  \mathbf{I}_n\otimes {\BS{\Sigma}}_{\varepsilon})$ are multivariate normal.
Hence, the vector $\mathbf{Y}$ of all observations is $n k$-dimensional multivariate normal, $\mathbf{Y}\sim \mathrm{N}(\mathbf{0}, \mathbf{I}_n \otimes \mathbf{V})$.
For the analysis of degradation under normal use we further assume that the general model~\ref{eq-general-model} is also valid at the normal use condition $\mathbf{x}_u$, where typically $\mathbf{x}_u \not\in \mathcal{X}$, i.\,e.\ 
\begin{equation}
	\label{eq-degr-path-use-cond}
	\mu(\mathbf{x}_{u},t) = (\mathbf{f}_1(\mathbf{x}_{u}) \otimes \mathbf{f}_2(t))^{T} \BS{\beta} + \mathbf{f}_2(t)^T \BS{\gamma}_u
\end{equation} 
describes the mean degradation of a future unit $u$ at normal use condition $\mathbf{x}_u$ and time $t$, and the random effects $\BS{\gamma}_u$ are $p_2$-dimensional multivariate normal with mean zero and variance covariance matrix $\BS\Sigma_\gamma$.

\section{Estimation of the model parameters}
\label{sec-estimation}
Under the distributional assumptions of normality for both the random effects and the measurement errors the model parameters may be estimated by means of the maximum likelihood method.
Denote by $\BS\theta = (\BS{\beta}^T, \BS{\varsigma}^T)^T$ the vector of all model parameters, where $\BS\varsigma$ collects all variance covariance parameters from $\BS\Sigma_\gamma$ and $\BS\Sigma_\varepsilon$
For the general model~(\ref{full-model}) the log-likelihood is given by
\begin{equation}
	\label{eq-loglikelihood}
	\ell(\BS{\theta}; \mathbf{y}) = 
	- {\textstyle{\frac{n k}{2}}} \log(2\pi) 
	- {\textstyle{\frac{n}{2}}} \log(\det(\mathbf{V})) 
	- {\textstyle{\frac{1}{2}}} (\mathbf y - (\mathbf{F}_{1} \otimes \mathbf{F}_2) \BS{\beta})^T (\mathbf{I}_n \otimes \mathbf{V})^{-1} (\mathbf{y} - (\mathbf{F}_{1} \otimes \mathbf{F}_2) \BS{\beta}) ,
\end{equation}
where the variance covariance matrix $\mathbf{V} = \mathbf{V}(\BS\varsigma)$ of measurements per unit depends only on $\BS\varsigma$.
The maximum likelihood estimator of $\BS\beta$ can be calculated as
\begin{eqnarray} 
	\widehat{\BS{\beta}} 
	&=&
	((\mathbf{F}_1 \otimes \mathbf{F}_2)^T (\mathbf{I}_n \otimes \widehat{\mathbf{V}})^{-1} (\mathbf{F}_1 \otimes \mathbf{F}_2))^{-1} (\mathbf{F}_1 \otimes \mathbf{F}_2)^T (\mathbf{I}_n \otimes \widehat{\mathbf{V}})^{-1} \mathbf{Y}
	\nonumber
	\\
	&=&
	((\mathbf{F}_1^T \mathbf{F}_1)^{-1} \mathbf{F}_1^T) \otimes 
	((\mathbf{F}_2^T \widehat{\mathbf{V}}^{-1} \mathbf{F}_2)^T \mathbf{F}_2^T \widehat{\mathbf{V}}^{-1}) \mathbf{Y} ,
	\label{eq-maxlik-beta}
\end{eqnarray}
if both $\mathbf{F}_1$ and $\mathbf{F}_2$ are of full column rank $p_1$ and $p_2$, respectively, and $\widehat{\mathbf{V}}=\mathbf{V}(\widehat{\BS\varsigma})$, where $\widehat{\BS\varsigma}$ is the maximum likelihood estimator of $\BS\varsigma$.
When $\mathbf{V}$ is known, at least up to a multiplicative constant, $\mathbf{V} = \sigma^2 \mathbf{V}_0$, then $\widehat{\BS\beta}$ is the best liner unbiased (general least squares) estimator $\widehat{\BS{\beta}}_\mathrm{GLS} = ((\mathbf{F}_1^T \mathbf{F}_1)^{-1} \mathbf{F}_1^T) \otimes ((\mathbf{F}_2^T \mathbf{V}_0^{-1} \mathbf{F}_2)^T \mathbf{F}_2^T \mathbf{V}_0^{-1}) \mathbf{Y}$ of $\BS\beta$.
In particular, when the measurement errors are uncorrelated and homoscedastic, i.\,e.\ $\BS\Sigma_\varepsilon = \sigma_\varepsilon^2 \mathbf{I}_k$, then this estimator reduces to the ordinary least squares estimator $\widehat{\BS{\beta}}_\mathrm{OLS} =  ((\mathbf{F}_1^T \mathbf{F}_1)^{-1} \mathbf{F}_1^T) \otimes ((\mathbf{F}_2^T \mathbf{F}_2)^T \mathbf{F}_2^T) \mathbf{Y}$  by a result of \cite{zyskind1967blue-ols} because $\mathbf{V} \mathbf{F}_2 = \mathbf{F}_2 (\BS\Sigma_{\gamma} \mathbf{F}_2^T \mathbf{F}_2 + \sigma_\varepsilon^2 \mathbf{I}_{p_2})$.
Hence, in the case of uncorrelated homoscedastic measurement errors the maximum likelihood estimator of the location parameters $\BS\beta$ does neither depend on the variance covariance parameters nor on their estimates.
In general, the quality of the estimator $\widehat{\BS{\beta}}$ can be measured in terms of its variance covariance matrix which is given by
\begin{equation} 
	\label{eq-covariance-estimator}
	\mathrm{Cov}(\widehat{\BS{\beta}}) = (\mathbf{F}_1^T \mathbf{F}_1)^{-1} \otimes (\mathbf{F}_2^T \mathbf{V}^{-1} \mathbf{F}_2)^{-1}.
\end{equation}
By using the structure $\mathbf{V} = \mathbf{F}_2 \BS\Sigma_\gamma \mathbf{F}_2^T + \BS\Sigma_{\varepsilon}$ the last term can be calculated as
\begin{equation}
	\label{cov_f2_calc}
	(\mathbf{F}_2^T \mathbf{V}^{-1} \mathbf{F}_2)^{-1} = (\mathbf{F}_2^T \BS\Sigma_{\varepsilon}^{-1} \mathbf{F}_2)^{-1} + \BS\Sigma_\gamma 
\end{equation}
in terms of the variance covariance matrices $\BS\Sigma_\gamma$ and $\BS\Sigma_{\varepsilon}$ of the random effects and the measurement errors, respectively.

\section{Information}  
\label{sec-information}
In general, the Fisher information matrix is defined as the variance covariance matrix of the score function $\mathbf{U}$ which itself is defined as the vector of first derivatives of the log likelihood with respect to the components of the parameter vector $\BS{\theta}$.
More precisely, let $\mathbf{U} = \left(\frac{\partial}{\partial \theta_1} \ell(\BS\theta; \mathbf{Y}), ..., \frac{\partial}{\partial \theta_q} \ell(\BS\theta; \mathbf{Y})\right)^T$, where $q$ is the dimension of $\BS\theta$.
Then for the full parameter vector $\BS\theta$ the Fisher information matrix is defined as $\mathbf{M}_{\BS\theta} = \mathrm{Cov}(\mathbf{U})$, where the expectation is taken with respect to the distribution of $\mathbf{Y}$.
The Fisher information can also be computed as minus the expectations of the second derivatives of the score function $U$, i.\,e.\ $\mathbf{M}_{\BS\theta} = - \mathrm{E}\left(\frac{\partial^2}{\partial \BS\theta \partial \BS\theta^T} \ell(\BS\theta; \mathbf{Y})\right)$. 
Under common regularity conditions the maximum likelihood estimator $\widehat{\BS\theta}$ of $\BS\theta$ is consistent and asymptotically normal with asymptotic variance covariance matrix equal to the inverse $\mathbf{M}_{\BS\theta}^{-1}$ of the Fisher information matrix $\mathbf{M}_{\BS\theta}$.
To specify the Fisher information matrix further, denote by $\mathbf{{M}}_{\BS\beta} = - \mathrm{E}\left(\frac{\partial^2}{\partial \BS\beta \partial \BS\beta^T} \ell(\BS\theta; \mathbf{Y})\right)$, $\mathbf{M}_{\BS\varsigma} = - \mathrm{E}\left(\frac{\partial^2}{\partial \BS\varsigma \partial \BS\varsigma^T} \ell(\BS\theta; \mathbf{Y})\right)$, $\mathbf{M}_{\BS\beta \BS\varsigma} = - \mathrm{E}\left(\frac{\partial^2}{\partial \BS\beta \partial \BS\varsigma^T} \ell(\BS\theta; \mathbf{Y})\right)$ and $\mathbf{{M}}_{\BS\varsigma \BS\beta} = \mathbf{{M}}_{\BS\beta \BS\varsigma}^T$ the blocks of the Fisher information matrix corresponding to the second derivatives with respect to $\BS\beta$ and $\BS\varsigma$ and the mixed derivatives, respectively. 
The mixed blocks can be seen to be zero and the Fisher information matrix is block diagonal,
\begin{equation}
	\label{info-block}
	\mathbf{M}_{\BS{\theta}}
	= \left( 
		\begin{array}{cc}
			\mathbf{M}_{\BS\beta} & \mathbf{0}
			\\
			\mathbf{0} & \mathbf{M}_{\BS\varsigma}
		\end{array}
	\right) .
\end{equation}  
Moreover, the block $\mathbf{M}_{\BS\beta}$ associated with the aggregate location parameters $\BS\beta$ can be determined as
\begin{equation}
	\mathbf{M}_{\BS\beta} = (\mathbf{F}_{1}^T \mathbf{F}_{1}) \otimes (\mathbf{F}_2^{T} \mathbf V^{-1}  \mathbf{F}_2) 
\end{equation}
which turns out to be the inverse of the variance covariance matrix for the estimator $\widehat{\BS\beta}$ of $\BS\beta$, when $\mathbf{V}$ is known.
Actually, because the Fisher information matrix for $\BS\theta$ is block diagonal, the inverse $	\mathbf{M}_{\BS\beta}^{-1} = (\mathbf{F}_1^T \mathbf{F}_1)^{-1} \otimes (\mathbf{F}_2^T \mathbf{V}^{-1} \mathbf{F}_2)^{-1}$ of the block associated with $\BS\beta$ is the corresponding block of the inverse of $\mathbf{M}_{\BS\theta}$ and is, hence, the asymptotic variance covariance matrix of $\widehat{\BS{\beta}}$.

Accordingly the asymptotic variance covariance matrix for estimating the variance parameters $\BS\varsigma$ is the inverse of the block $\mathbf{M}_{\BS\varsigma}$.
In the following we will call $\mathbf{M}_{\BS\beta}$ and $\mathbf{M}_{\BS\varsigma}$ the information matrices for $\BS\beta$ and $\BS\varsigma$, respectively, for short.
The particular form of $\mathbf{M}_{\BS\varsigma}$ will be not of interest here.
However, as the information matrix $\mathbf{M}_{\BS\varsigma}$ for the variance parameters $\BS\varsigma$ is given by
\begin{equation*}
	\label{12345678934tg}
	\textbf{M}_{\BS\varsigma} = \frac{n}{2} \left(\frac{\partial^2 \log(\det(\mathbf{V}))}{\partial \BS\varsigma \partial \BS\varsigma^T} + \mathrm{tr}\left(\mathbf{V} \frac{\partial^2 \mathbf{V}^{-1}}{\partial \BS\varsigma \partial \BS\varsigma^T}\right)\right).
\end{equation*}
It is important to note that $\mathbf{M}_{\BS\varsigma}$ does not depend on the settings $\mathbf{x}_1, ..., \mathbf{x}_n$ of the stress variable in contrast to the information matrix $\mathbf{M}_{\BS\beta}$ of the aggregate location parameters $\BS\beta$.
For the general product-type model~(\ref{full-model}) the information matrix $\mathbf{M}_{\BS\beta}$ for the aggregate parameters $\BS\beta$ factorizes according to
\begin{equation}
	\label{eq-info-product}
	\mathbf{M}_{\BS\beta} = \mathbf{M}_1 \otimes \mathbf{M}_2
\end{equation}
into the information matrix $\mathbf{M}_1 = \mathbf{F}_1^T \mathbf{F}_1$ in the marginal model 
\begin{equation}
	\label{eq-marginal-1}
	Y_{i}^{(1)} = \mathbf{f}_1(\mathbf{x}_i)^T \BS\beta^{(1)} + \varepsilon_i^{(1)} ,
\end{equation}
$i=1,...,n$, in the stress variable $\mathbf{x}$ with standardized uncorrelated homoscedastic error terms, $\sigma^2_{\varepsilon^{(1)}} = 1$, and the information matrix $\mathbf{M}_2 = \mathbf{F}_2^T\mathbf{V}^{-1}\mathbf{F}_2$ in the mixed effects marginal model 
\begin{equation}
	\label{eq-marginal-2}
	Y_{j}^{(2)} = \mathbf{f}_2(t_j)^T \BS\beta^{(2)} + \mathbf{f}_2(t_j)^T \BS\gamma^{(2)} + \varepsilon_j^{(2)} ,
\end{equation}
$j=1,...,k$, in the time variable $t$ with variance covariance matrices $\BS\Sigma_{\gamma}$ and $\BS\Sigma_{\varepsilon}$ for the random effects $\BS\gamma^{(2)}$ and measurement errors $\BS\varepsilon^{(2)} = (\varepsilon_1^{(2)}, ..., \varepsilon_k^{(2)})^T$, respectively.
Then the information matrix $\mathbf{M}_{\BS\theta}$ in the full model depends on the settings $\mathbf{x}_1, ..., \mathbf{x}_n$ of the stress variable only through the information matrix $\mathbf{M}_1$ in the first marginal model.

\section{Design}
\label{sec-design} 
The quality of the estimates will be measured in terms of the information matrix and, hence, depends on both the settings of the stress variable and the time points of measurements. When these variables are under the control of the experimenter, then their choice will be called the design of the experiment.
Here we assume that the time plan $\mathbf{t} = (t_1, ..., t_k)^T$ for the time points of measurements within units is fixed in advance and is not under disposition of the experimenter.
Then only the settings $\mathbf{x}_1, ..., \mathbf{x}_n$ of the stress variable $\mathbf{x}$ can be adjusted to the units $i = 1, ..., n$.
Their choice $(\mathbf{x}_1, ..., \mathbf{x}_n)$ is then called an ``exact'' design, and their influence on the performance of the experiment is indicated by adding them as an argument to the information matrices, $\mathbf{M}_{\BS\theta}(\mathbf{x}_1, ..., \mathbf{x}_n)$, $\mathbf{M}_{\BS\beta}(\mathbf{x}_1, ..., \mathbf{x}_n)$, and $\mathbf{M}_{1}(\mathbf{x}_1, ..., \mathbf{x}_n)$, where appropriate.
Remind that both $\mathbf{M}_{\BS\varsigma}$ and $\mathbf{M}_{2}$ do not depend on the design for the stress variable.

As $\mathbf{M}_{1}(\mathbf{x}_1, ..., \mathbf{x}_n) = \sum_{i = 1}^n \mathbf{f}(\mathbf{x}_i) \mathbf{f}(\mathbf{x}_i)^T$ it can easily be seen that the information matrices do not depend on the order of the setting but only on their mutually distinct settings, $\mathbf{x}_1, ..., \mathbf{x}_m$ say, and their corresponding frequencies $n_1, ..., n_m$, such that $\sum_{i = 1}^m n_i = n$, i.\,e.\ $\mathbf{M}_{1} = \sum_{i = 1}^m n_i \mathbf{f}(\mathbf{x}_i) \mathbf{f}(\mathbf{x}_i)^T$.
Finding optimal exact designs is, in general, a difficult task of discrete optimization.
To circumvent this problem we follow the approach of approximate designs propagated by \cite{kiefer1959optimum} in which the requirement of integer numbers $n_i$ of testing units at a stress level $\mathbf{x}_i$ is relaxed.
Then continuous methods of convex optimization can be employed (see e.\,g.\ \cite{silvey1980optimal}) and efficient exact designs can be derived by rounding the optimal numbers to nearest integers.
This approach is, in particular, of use when the number $n$ of units is sufficiently large.
Moreover, the frequencies $n_i$ will be replaced by proportions $w_i=n_i/n$, because the total number $n$ of units does not play a role in the optimization. 
Thus an approximate design $\xi$ is defined by a finite number of settings $\mathbf{x}_i$, $i=1,...,m$, from the experimental region $\mathcal{X}$ with corresponding weights $w_i\geq 0$ satisfying $\sum_{i = 1}^m w_i = 1$ and is denoted by
\begin{equation}
	\label{eq-design-x}
	\xi = \left(
		\begin{array}{ccc}
			\mathbf{x}_1& ... &\mathbf{x}_{m} 
			\\ 
			w_{1}& ... &w_{m}
		\end{array} 
	\right),
\end{equation}\\
The corresponding standardized, per unit information matrices are accordingly defined as
\begin{equation}
	\label{eq-info_1-xi}
	\mathbf{M}_1(\xi)
	= \sum_{i = 1}^m w_i \mathbf{f}_{1}(\mathbf{x}_{i}) \mathbf{f}_{1}(\mathbf{x}_{i})^T 
\end{equation}
for the marginal model on itself or by plugging (\ref{eq-info_1-xi}) in into the standardized, per unit information matrix
\begin{equation}
	\label{eq-info_beta-xi}
	\mathbf{M}_{\BS\beta}(\xi)
	= \mathbf{M}_{1}(\xi) \otimes \mathbf M_2
\end{equation}
for the aggregate parameters $\BS\beta$, where again $\mathbf{M}_2=\mathbf{F}_2^{T} \mathbf V^{-1} \mathbf{F}_2$, and
\begin{equation}
	\label{eq-info_theta-xi}
	\mathbf{M}_{\BS\theta}(\xi)
	= \left(
		\begin{array}{cc}
			 \mathbf{M}_{1}(\xi) \otimes \mathbf M_2 & \mathbf{0}\\
			 \mathbf{0} & \widetilde{\mathbf{M}}_{\BS\varsigma}
		\end{array}
	\right)
\end{equation}
or the full parameter vector $\BS\theta$, where now $\widetilde{\mathbf{M}}_{\BS\varsigma} = \frac{1}{n} \mathbf{M}_{\BS\varsigma}$ is the standardized, per unit information for the variance parameters $\BS\varsigma$.
If all $n w_i$ are integer, then these standardized versions coincide with the information matrices of the corresponding exact design up to the normalizing factor $1/n$ and are, hence, an adequate generalization.
In order to optimize information matrices, some optimality criterion has to be employed which is a real valued function of the information matrix and reflects the main interest in the experiment.
\section{Optimality criterion based on the failure time under normal use condition}
\label{sec-failuretime}
As in \cite{doi:10.1002/asmb.2061} we are interested in some characteristics of the failure time distribution of soft failure due to degradation. 
Therefore it is assumed that the model equation (\ref{eq-degr-path-use-cond}) $\mu_{u}(t) = \mu(\mathbf{x}_{u}, t) = (\mathbf{f}_1(\mathbf{x}_{u}) \otimes \mathbf{f}_2(t))^{T} \BS{\beta} + \mathbf{f}_2(t)^T \BS{\gamma}_u$ for the mean degradation paths is also valid under normal use condition $\mathbf{x}_u$, where $\mu_u$ denotes the degradation path under normal use condition for short.
We further denote by $\mu(t) = \mathrm{E}(\mu_u(t)) = (\mathbf{f}_1(\mathbf{x}_{u}) \otimes \mathbf{f}_2(t))^{T}\BS{\beta}$ the aggregate degradation path under normal use condition and by $\BS\delta = \BS\delta(\BS\beta) = (\delta_1(\BS\beta), ..., \delta_{p_2}(\BS\beta))^T$ the vector of its coefficients $\delta_{s} = \delta_{s}(\BS\beta) = \sum_{r = 1}^{p_1} f_{1 r}(\mathbf{x}_u)\beta_{r s}$, $s = 1, ...., p_2$, in the regression functions $f_{2 s}$ in $t$, i.\,e.\ $\mu(t) = \mathbf{f}_2(t)^T\BS\delta = \sum_{s = 1}^{p_2}\delta_s f_{2 s}(t)$.

For the following it is assumed that the mean degradation paths are strictly increasing over time.
Then a soft failure due to degradation is defined as the exceedance of the degradation over a  failure threshold $y_0$.
This definition is based on the mean degradation path and not on a ``real'' path subject to measurement errors.
The failure time $T$ under normal use condition is then defined as the first time $t$ the mean degradation path $\mu_u(t)$ reaches or exceeds the threshold $y_0$, i.\,e.\ $T = \min \{t \geq 0;\, \mu_u(t) \geq y_0\}$.
As random effects $\BS\gamma_u$ are involved in the mean degradation path, the failure time $T$ is random.
Actually, $T$ may become infinite, if the mean degradation path does not reach the threshold, or may degenerate to $T = 0$, if the degradation already exceeds the threshold at time $t=0$, because of unfortunate values of the random effects $\BS\gamma_u$, but this will happen only with low probability and will not affect the further argumentation.

In order to describe certain characteristics of the distribution of the failure time $T$, we will determine its distribution function $F_T(t) = \mathrm{P}(T \leq t)$.
First note that $T \leq t$ if and only if $\mu_u(t) \geq y_0$.
Hence
\begin{eqnarray}
   F_{T}(t) &  = & \mathrm{P}(\mu_u(t) \geq y_0)
   \nonumber
   \\
   & = & \mathrm{P}(\mu(t) + \mathbf{f}_2(t)^T \BS{\gamma}_u \geq y_0) 
   \nonumber
	\\
	& = & \mathrm{P} (- \mathbf{f}_2(t)^T \BS{\gamma}_u \leq \mu(t) - y_0) 
   \nonumber
   \\
	& = & \Phi(h(t)) ,
    \label{eq-failure-time-distribution}
\end{eqnarray}
where
\begin{equation}
	\label{eq-h-tau}
	h(t) = \frac{\mu(t) - y_0}{\sigma_u(t)} ,
\end{equation}
$\sigma_u^2(t) = \mathbf{f}_2(t)^T \BS\Sigma_\gamma \mathbf{f}_2(t)$ is the variance of the mean degradation path $\mu_u(t)$ at time $t$, and $\Phi$ denotes the distribution function of the standard normal distribution.
Here it is tacitly assumed that the variance $\sigma_\mu^2(t)$ of the mean degradation path  is greater than zero for every $t \geq 0$.
This condition is satisfied, in particular, when the variance covariance matrix $\BS\Sigma_\gamma$ of the random effects is positive definite.

We will be interested in quantiles $t_\alpha$ of the failure time distribution, i.\,e.\ $\mathrm{P}(T \leq t_\alpha) = \alpha$.
For each $\alpha$ the quantile $t_\alpha$ gives the time up to which under normal use conditions (at least) $\alpha \cdot 100$ percent of the units fail and (at least) $(1 - \alpha) \cdot 100$ percent of the units persist.
The quantiles $t_\alpha$ are increasing in $\alpha$.
Note that this standard definition of quantiles is in contrast to the``upper'' quantiles ($t_{1-\alpha}$) used in  \cite{doi:10.1002/asmb.2061} where percentages of failures and persistence are reversed.
Of particular importance is the median $t_{0.5}$ up to which under normal use conditions half of the units fails and half of the units persist ($\alpha = 0.5$).
Other characteristics of interest may be the five or ten percent quantiles $t_{0.05}$ and $t_{0.1}$ which give the times up to which $95$ or $90\,\%$ percent of the units persist, respectively.
By (\ref{eq-failure-time-distribution}) these quantiles can be determined as the solutions of the equation 
\begin{equation}
	\label{eq-tau-alpha-implicit}
	h(t_\alpha) = z_\alpha ,
\end{equation}
where $z_\alpha = \Phi^{-1}(\alpha)$ is the $\alpha$-quantile of the standard normal distribution.
For the median ($\alpha = 1/2$) we have $z_{0.5} = 0$ and, hence the median failure time $t_{0.5}$ is the solution of $\mu(t) = y_0$, i.\,e.\ the aggregate degradation path reaches the threshold at time $t_{0.5}$.
Note that the function $h$ represents the failure time distribution function $F_T$ on a normal Q-Q-plot scale. In the particular case of straight lines for the mean degradation paths, i.\,e.\ $\mathbf{f}_2(t) = (1,t)^T,$ 
the function $h(t)$ specifies to
\begin{equation}
	\label{eq-h-tau-linear}
	h(t) = \frac{\delta_2 t + \delta_1 - y_0}{\sqrt{\sigma_1^2 + 2\rho \sigma_1 \sigma_2 t + \sigma_2^2 t^2}} ,
\end{equation}
where $\delta_{1} = \sum_{r = 1}^{p_1} f_{1 r}(\mathbf{x}_u) \beta_{r 1}$ and $\delta_2 = \sum_{r = 1}^{p_1} f_{1 r}(\mathbf{x}_u) \beta_{r 2}$ are the intercept and the slope of the aggregate degradation path $\mu(t) = \delta_1 + \delta_2 t$ under normal use condition, respectively.
The median failure time is then given by $t_{0.5} = (y_0 - \delta_1) / \delta_2$ which provides a proper solution $t_{0.5} > 0$ under the natural assumptions that the aggregate degradation path is increasing, $\delta_2 > 0$, and that the aggregate degradation at the beginning of the testing at time $t=0$ is less than the threshold of soft failure, $\delta_1 < y_0$.


Under the additional assumption that the correlation of the random effects is non-negative for the intercept and the slope of the mean degradation path,  $\rho \geq 0$, the function $h(t)$ can be seen to be strictly increasing, $h^\prime(t) > 0$, in $t>0$.
This also remains true for small to moderate negative correlations.
However, the range of $h(t)$ is bounded and does not cover the whole real line such that not all quantiles are non-degenerate.
For small $\alpha$ the $\alpha$-quantile to be positive requires $z_\alpha > h(0) = - (y_0 - \delta_1)/\sigma_1$, i.\,e.\ the variance $\sigma_1^2$ of the intercept of the mean degradation path has to be sufficiently small compared to the distance from its mean $\delta_1$ to the threshold $y_0$.
In particular, in the case of the $5\,\%$-quantile $\sigma_1<0.608(y_0-\delta_1)$ is needed for $t_{0.05}>0$.
For large $\alpha$ the $\alpha$-quantile is finite if $z_\alpha < \lim_{t\to\infty}h(t) = \delta_2/\sigma_2$, i.\,e.\ the variance $\sigma_2^2$ of the slope of the mean degradation path has to be sufficiently small compared to its mean $\delta_2$. 
Note that $\Phi(h(0)) = 1 - \Phi((y_0 - \delta_1) / \sigma_1)$ is the probability that under normal use condition the mean degradation path exceeds the threshold $y_0$ already at the initial time $t=0$.
Note also that formally $1 - \Phi(\lim_{t \to \infty} h(t)) = \Phi( - \delta_2 / \sigma_2)$ is the probability that the mean degradation path has a negative slope which may be interpreted as the probability that soft failure due to degradation will not occur at all under normal use condition.
When the $\alpha$-quantile is non-degenerate ($0 < t_\alpha < \infty$), then $t_\alpha$ is a solution of the quadratic equation 
\begin{equation*}
	\label{eq-h-tau-linear-solve}
	(\delta_2 t + \delta_1 - y_0)^2 =  z_\alpha^2(\sigma_1^2 + 2\rho\sigma_1\sigma_2 t + \sigma_2^2 t^2) ,
\end{equation*}
as indicated by \cite{doi:10.1002/asmb.2061}.
In the special case of only a random intercept in the random effects, i.\,e.\ $\sigma_2^2 = 0$, all $\alpha$-quantiles $t_\alpha$ finitely exist for $\alpha \geq \Phi( - (y_0-\delta_1) / \sigma_1)$ and can be determined as the solution of a linear equation to $t_\alpha = (y_0 - \delta_1 + z_\alpha\sigma_1) / \delta_2$.

In any case the quantile $t_\alpha = t_\alpha(\BS\theta)$ is a function of both the aggregate location parameters $\BS\beta$ and the variance parameters $\BS\varsigma$, in general.
Hence, the maximum likelihood estimator of the quantile $t_\alpha$ is given by $\widehat{t}_\alpha = t_\alpha(\widehat{\BS\theta})$ in terms of the maximum likelihood estimator $\widehat{\BS\theta}$ of $\BS\theta$.
The task of designing the experiment will now be to provide an as precise estimate of the $\alpha$-quantile as possible.

By the delta-method $\widehat{t}_\alpha$ is seen to be asymptotically normal with  asymptotic variance
\begin{equation}
	\label{eq-avar-tau-alpha}
	\mathrm{aVar}(\widehat{t}_{\alpha}) = \mathbf{c}^T \mathbf{M}_{\BS{\theta}}^{-1} \mathbf{c} ,
\end{equation}
where $\mathbf{c} = \frac{\partial}{\partial\BS\theta}t_{\alpha}$ is the gradient vector of partial derivatives of $t_{\alpha}$ with respect to the components of the parameter vector ${\BS\theta}$. 
The asymptotic variance depends on the design of the experiment through the information matrix $ \mathbf{M}_{\BS{\theta}}$ and will be chosen as the optimality criterion for the design.

The gradient $\mathbf{c}$ can be seen to be equal to
\begin{equation}
	\label{eq-gradient-tau-alpha}
	\mathbf{c} = - c_0 \left({\textstyle{\frac{\partial}{\partial\BS\theta}}}\mu(t)\vert_{t=t_\alpha} - z_\alpha {\textstyle{\frac{\partial}{\partial\BS\theta}}}\sigma_u(t)\vert_{t=t_\alpha}\right) ,
\end{equation}
in view of (\ref{eq-h-tau}) and (\ref{eq-tau-alpha-implicit}) by the implicit function theorem (see e.\,g.\ \cite{krantz2012implicit}), where $c_0 = 1 / (\mu^\prime(t_\alpha) - z_\alpha \sigma_u^\prime(t_\alpha))$ is the inverse of the derivative of the defining function $h$ with respect to $t$.

As the aggregate mean degradation $\mu(t)$ only depends on the aggregate location parameters $\BS\beta$ and the variance $\sigma_u^2(t)$ only depends on the variance parameters $\BS\varsigma$ the gradient simplifies to $\mathbf{c} = - c_0 (\mathbf{c}_{\BS\beta}^T, \mathbf{c}_{\BS\varsigma}^T)^T$, where
\begin{equation*}
	\label{eq-gradient-tau-alpha-beta}
	\mathbf{c}_{\BS\beta} = {\textstyle{\frac{\partial}{\partial\BS\beta}}}\mu(t)\vert_{t = t_\alpha} = \mathbf{f}(\mathbf{x}_u, t_\alpha)
\end{equation*}
is the gradient of $\mu(t)$ with respect to $\BS\beta$ and
\begin{equation*}
	\label{eq-gradient-tau-alpha-sigma}
	\mathbf{c}_{\BS\varsigma} = - z_\alpha {\textstyle{\frac{\partial}{\partial\BS\varsigma}}}\sigma_u(t)\vert_{t = t_\alpha}
\end{equation*}
is $ - z_\alpha$ times the gradient of $\sigma_u(t)$ with respect to $\BS\varsigma$.
The particular shape of $\mathbf{c}_{\BS\varsigma}$ does not play a role here, in general.
But note that $\mathbf{c}_{\BS\varsigma} = \mathbf{0}$ in the case of the median ($\alpha=0.5$).

By the block diagonal form (\ref{info-block}) of the information matrix the asymptotic variance (\ref{eq-avar-tau-alpha}) of $\widehat{t}_\alpha$ becomes
\begin{equation}
	\label{eq-avar-tau-alpha-sum}
	\mathrm{aVar}(\widehat{t}_{\alpha}) = c_0^2 \left(\mathbf{c}_{\BS\beta}^T \mathbf{M}_{\BS{\beta}}^{-1} \mathbf{c}_{\BS\beta} + \mathbf{c}_{\BS\varsigma}^T \mathbf{M}_{\BS{\varsigma}}^{-1} \mathbf{c}_{\BS\varsigma}\right) 
\end{equation}
which simplifies to
\begin{equation}
	\label{eq-avar-tau-median}
	\mathrm{aVar}(\widehat{t}_{0.5}) = c_0^2 \mathbf{c}_{\BS\beta}^T \mathbf{M}_{\BS{\beta}}^{-1} \mathbf{c}_{\BS\beta} 
\end{equation}
in the case of the median.

For the product-type model~(\ref{full-model}) the expression related to the aggregate parameters $\BS\beta$ further decomposes,
\begin{equation}
	\label{eq-avar-tau-alpha-decompose}
	\mathbf{c}_{\BS\beta}^T \mathbf{M}_{\BS{\beta}}^{-1} \mathbf{c}_{\BS\beta} = \mathbf{f}_1(\mathbf{x}_u)^T \mathbf{M}_{1}^{-1} \mathbf{f}_1(\mathbf{x}_u) \cdot \mathbf{f}_2(t_\alpha)^T \mathbf{M}_{2}^{-1} \mathbf{f}_2(t_\alpha) ,
\end{equation}
as $\mathbf{c}_{\BS\beta} = \mathbf{f}_1(\mathbf{x}_u) \otimes \mathbf{f}_2(t_\alpha)$ factorizes.

\section{Optimal designs with predetermined measurement times}
\label{sec-od-fixed-tau}

From (\ref{eq-avar-tau-alpha-sum}) and (\ref{eq-avar-tau-alpha-decompose}) it can be seen that for obtaining a minimal asymptotic variance for $\widehat{t}_\alpha$ only $\mathbf{f}_1(\mathbf{x}_u)^T \mathbf{M}_{1}^{-1} \mathbf{f}_1(\mathbf{x}_u)$ has to be minimized, because all other terms do not depend on the experimental settings $\mathbf{x}_1, ..., \mathbf{x}_n$ of the stress variable, when the measurement times $t_1,...,t_k$ are predetermined.
The optimality criterion of minimization of the asymptotic variance of $\widehat{t}_\alpha$ thus reduces to a $c$-criterion $\mathbf{c}_{1}^T \mathbf{M}_{1}(\xi)^{-1} \mathbf{c}_{1}$ for extrapolation of the marginal response at normal use condition $\mathbf{x}_u$ in the first marginal model~(\ref{eq-marginal-1}), $\mathbf{c}_{1} = \mathbf{f}_1(\mathbf{x}_u)$, which is a well-known problem from the literature (see \cite{klefer1964optimum}).
It is remarkable that this criterion and, hence, the corresponding optimal design is the same whatever the value of $\alpha$ is, as long as there is a proper solution $0 < t_\alpha < \infty$ for the $\alpha$-quantile of the failure time.

\begin{proposition}
	\label{prop-extrapolation-x}
	If the design $\xi^*$ is $c$-optimal for extrapolation of the mean response at the normal use condition in the marginal model~(\ref{eq-marginal-1}) for the stress variable, then $\xi^*$ minimizes the asymptotic variance for the estimator $\widehat{t}_\alpha$ of the $\alpha$-quantile of the failure time for every $\alpha$ when $0 < t_\alpha < \infty$ (for predetermined measurement times $t_1, ..., t_k$).
\end{proposition}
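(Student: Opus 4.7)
The plan is to reduce the design-dependent part of $\mathrm{aVar}(\widehat{t}_\alpha)$ to a single $c$-criterion in the stress marginal. Combining~(\ref{eq-avar-tau-alpha-sum}) with the Kronecker factorization~(\ref{eq-avar-tau-alpha-decompose}) and with $\mathbf{M}_{\BS\beta}(\xi)=\mathbf{M}_1(\xi)\otimes\mathbf{M}_2$ from~(\ref{eq-info_beta-xi}) yields
\[
\mathrm{aVar}(\widehat{t}_\alpha)
\;=\;c_0^{\,2}\Bigl(\mathbf{f}_1(\mathbf{x}_u)^{T}\mathbf{M}_1(\xi)^{-1}\mathbf{f}_1(\mathbf{x}_u)\,\cdot\,\mathbf{f}_2(t_\alpha)^{T}\mathbf{M}_2^{-1}\mathbf{f}_2(t_\alpha)\;+\;\mathbf{c}_{\BS\varsigma}^{T}\mathbf{M}_{\BS\varsigma}^{-1}\mathbf{c}_{\BS\varsigma}\Bigr).
\]

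Next I would argue which quantities in this expression depend on the stress design $\xi$ and which do not. The quantile $t_\alpha$ solves $h(t_\alpha)=z_\alpha$, and by~(\ref{eq-h-tau}) the defining function $h$ depends only on $\BS\beta$ and $\BS\varsigma$; hence $t_\alpha$, the scalar $c_0 = 1/(\mu'(t_\alpha)-z_\alpha\sigma_u'(t_\alpha))$, the vector $\mathbf{f}_2(t_\alpha)$, and $\mathbf{c}_{\BS\varsigma}$ are all functions of model parameters and of the (predetermined) time plan only. Section~\ref{sec-information} already records that $\mathbf{M}_{\BS\varsigma}$ and $\mathbf{M}_2=\mathbf{F}_2^{T}\mathbf{V}^{-1}\mathbf{F}_2$ do not depend on the stress settings either. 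Consequently, the only factor in the displayed variance that can be moved by the choice of $\xi$ is $\mathbf{f}_1(\mathbf{x}_u)^{T}\mathbf{M}_1(\xi)^{-1}\mathbf{f}_1(\mathbf{x}_u)$, and the multiplicative prefactor $c_0^{\,2}\,\mathbf{f}_2(t_\alpha)^{T}\mathbf{M}_2^{-1}\mathbf{f}_2(t_\alpha)$ is a strictly positive constant in $\xi$. Thus minimizing $\mathrm{aVar}(\widehat{t}_\alpha)$ over $\xi$ is equivalent to minimizing
\[
\Psi(\xi)\;=\;\mathbf{f}_1(\mathbf{x}_u)^{T}\mathbf{M}_1(\xi)^{-1}\mathbf{f}_1(\mathbf{x}_u).
\]

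The objective $\Psi(\xi)$ is precisely the $c$-optimality criterion with $\mathbf{c}_1=\mathbf{f}_1(\mathbf{x}_u)$ in the first marginal model~(\ref{eq-marginal-1}), that is, the problem of optimally extrapolating the mean response to the normal use condition $\mathbf{x}_u$. Therefore any design $\xi^*$ that is $c$-optimal for this extrapolation problem also minimizes $\mathrm{aVar}(\widehat{t}_\alpha)$. Moreover, since $\Psi$ does not involve $\alpha$ at all, the optimal stress design $\xi^*$ is the same for every $\alpha$ with $0<t_\alpha<\infty$; varying $\alpha$ only rescales the asymptotic variance by a (design-independent) positive factor and shifts it by a (design-independent) additive constant.

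There is no substantive obstacle beyond careful bookkeeping; the subtle point worth checking in the write-up is the strict positivity of the prefactor $\mathbf{f}_2(t_\alpha)^{T}\mathbf{M}_2^{-1}\mathbf{f}_2(t_\alpha)$, which is what guarantees that the reduction to $\Psi(\xi)$ preserves the optimum rather than trivialising it. This positivity follows because $\mathbf{M}_2$ is positive definite (as a Fisher information in the second marginal model, cf.~(\ref{cov_f2_calc})) and $\mathbf{f}_2(t_\alpha)\neq\mathbf{0}$ under the standing assumption that the marginal regression in $t$ contains a constant term (as in the running examples with $\mathbf{f}_2(t)=(1,t)^{T}$).
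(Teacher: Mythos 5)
Your proposal is correct and follows essentially the same route as the paper: the paper likewise combines~(\ref{eq-avar-tau-alpha-sum}) with the factorization~(\ref{eq-avar-tau-alpha-decompose}), notes that $c_0$, $t_\alpha$, $\mathbf{f}_2(t_\alpha)$, $\mathbf{M}_2$, $\mathbf{c}_{\BS\varsigma}$ and $\mathbf{M}_{\BS\varsigma}$ do not depend on the stress settings, and concludes that minimizing the asymptotic variance reduces to the $c$-criterion $\mathbf{f}_1(\mathbf{x}_u)^{T}\mathbf{M}_1(\xi)^{-1}\mathbf{f}_1(\mathbf{x}_u)$ for extrapolation at $\mathbf{x}_u$ in the marginal model~(\ref{eq-marginal-1}). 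Your explicit check that the prefactor $\mathbf{f}_2(t_\alpha)^{T}\mathbf{M}_2^{-1}\mathbf{f}_2(t_\alpha)$ is strictly positive is a minor addition the paper leaves implicit, not a different argument.
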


Although the normal use condition is typically outside the experimental region, the above proposition also would hold for interpolation, i.\,e.\ $\mathbf{x}_u \in \mathcal{X}$.
The result of Proposition~\ref{prop-extrapolation-x} is next used to derive optimal designs for the situation in Examples~\ref{ex-intro} and \ref{ex-2}.

To quantify the quality of a standard design $\xi_0$ for estimating the quantile $t_\alpha$ of the mean failure time under normal use condition we make use of the efficiency
\begin{equation}
	\label{eq-avar-eff}
	\mathrm{eff}_{\mathrm{aVar}}(\xi_0) = \frac{\mathrm{aVar}(\widehat{t}_\alpha; \xi^*)}{\mathrm{aVar}(\widehat{t}_\alpha; \xi_0)} ,	
\end{equation} 
where $\mathrm{aVar}(\widehat{t}_\alpha; \xi) = \mathbf{c}^T \mathbf{M}_{\BS{\theta}}(\xi)^{-1} \mathbf{c}$ denotes the standardized asymptotic variance for estimating $t_\alpha$ by equation~(\ref{eq-info_theta-xi}) when design $\xi$ is used, and $\xi^*$ is the corresponding optimal design.
The efficiency gives the proportion of units to be used under the optimal design $\xi^*$ which provides (asymptotically) the same accuracy (in terms of the asymptotic variance) compared to the standard design $\xi_0$.
For example, if the efficiency is $0.5$ twice the number of units have to be used under $\xi_0$ than under the optimal design $\xi^*$ to get the same accuracy.
Note that both the asymptotic variance and the efficiency may also depend on the parameter vector $\BS\theta$, at least, through $t_\alpha$ and are, hence, local quantities (at $\BS\theta$) without explicitly stated in the notation.

In the case of estimating the median $t_{0.5}$ the standardized asymptotic variance factorizes as 
\begin{equation}
	\label{eq-avar-factorize}
	\mathrm{aVar}(\widehat{t}_{0.5}; \xi) = \frac{1}{n} c_0^2 \mathbf{f}_1(\mathbf{x}_u)^T \mathbf{M}_{1}(\xi)^{-1} \mathbf{f}_1(\mathbf{x}_u) \cdot \mathbf{f}_2(t_\alpha)^T \mathbf{M}_{2}^{-1} \mathbf{f}_2(t_\alpha)
\end{equation} 
by equations~(\ref{eq-avar-tau-median}) and (\ref{eq-avar-tau-alpha-decompose}) for the general product-type model~(\ref{full-model}).
Thus the efficiency defined in (\ref{eq-avar-eff}) reduces to the $c$-efficiency 
\begin{equation*}
	\label{eq-c-eff}
	\mathrm{eff}_c(\xi_0) = \frac{\mathbf{f}_1(\mathbf{x}_u)^T \mathbf{M}_{1}(\xi^*)^{-1} \mathbf{f}_1(\mathbf{x}_u)}{\mathbf{f}_1(\mathbf{x}_u)^T \mathbf{M}_{1}(\xi_0)^{-1} \mathbf{f}_1(\mathbf{x}_u)} 	
\end{equation*} 
for extrapolation at the normal use condition $\mathbf{x}_u$ in the first marginal model with uncorrelated homoscedastic errors and does not depend on $\BS\theta$. It has to be noted that the efficiency calculations for the numerical examples in section \ref{numeric-examples} are all related to estimating the median failure time for soft failure due to degradation under normal use conditions $\mathbf{x}_u$.
For estimating any other quantile $t_\alpha$ of the failure time distribution, the efficiency of a design $\xi$ can be written as 
\begin{equation}
	\label{eq-avar-x-lower-bound}
	\mathrm{eff}_{\mathrm{aVar}}(\xi) = \mathrm{eff}_c(\xi) + (1 - \mathrm{eff}_c(\xi)) \mathbf{c}_{\BS\varsigma}^T \widetilde{\mathbf{M}}_{\BS{\varsigma}}^{-1} \mathbf{c}_{\BS\varsigma} / \mathrm{aVar}(\widehat{t}_{\alpha}; \xi)
\end{equation}
by equations~(\ref{eq-avar-tau-alpha-sum}) and (\ref{eq-info_theta-xi}).
This efficiency depends on the variance parameters, but it is bounded from below by the $c$-efficiency $\mathrm{eff}_c(\xi)$ of $\xi$ for extrapolation at $\mathbf{x}_u$.
Hence, designs with a high efficiency for estimating the median failure time are also suitable for estimating any other reasonable quantile $t_\alpha$, $0 < t_\alpha < \infty$.
\section{Examples of optimal designs of stress variables}
\label{numeric-examples}
In this section we provide certrain examples of optimal designs of accelerated degradation testing.
We consider first a simple example based on \cite{doi:10.1002/asmb.2061}.

\begin{example}
 In Table~\ref{tab-ex-intro-nominal-values} we reproduce the nominal values of Example~7.2 by \cite{doi:10.1002/asmb.2061} on scar width growth after standardization for further use.

\begin{table}
	\begin{center}
		\caption{Nominal values for Example~\ref{ex-intro}}
		\label{tab-ex-intro-nominal-values}
		\vspace{2mm}
		\begin{tabular}{c|c|c|c|c||c|c|c|c||c||c}
			& $\beta_{1}$ & $\beta_2$ & $\beta_3$ & $\beta_4$ & $\sigma_{1}$ & $\sigma_{2}$ & $\rho$ & $\sigma_\varepsilon$ & ${x}_{u}$ & $y_0$
			\\
			$f_j(x,t)$ & $1$ & $x$ & $t$ & $x t$ & & & & & &
			\\
			\hline
			& $2.397$ & $1.629$ & $1.018$ & $0.0696$ & $0.114$ & $0.105$ & $-0.143$ & $0.048$ & $-0.056$ & $3.912$ 		
		\end{tabular} 
	\end{center}
\end{table}
The aggregate degradation path $\mu(t) = \delta_1 + \delta_2 t$ has intercept $\delta_1 = \beta_1 + \beta_2 x_u$ and slope $\delta_2 = \beta_3 + \beta_4 x_u$.
	Hence, the median failure time is given by $t_{0.5} = (y_0 - \beta_1 - \beta_2 x_u) / (\beta_3 + \beta_4 x_u)$.
	If we use the standardized nominal values of Table~\ref{tab-ex-intro-nominal-values}, the aggregate degradation path becomes $\mu(t) = 2.306 + 1.014 t$ under normal use condition, and the median failure time is $t_{0.5} = 1.583$.
	Note that, as typical for degradation experiments, the median failure time is larger than the maximal experimental time $t_{\max} = 1$. Subsequently, in view of equation \eqref{eq-h-tau-linear}, $h(t)$ is plotted in Figure~\ref{define-h-ex1} under the standardized nominal values of \citep{doi:10.1002/asmb.2061} given in Table~\ref{tab-ex-intro-nominal-values}.
	The defining function $h(t)$ is seen to be strictly increasing although the correlation is moderately negative ($\rho = - 0.143$). 
	Thus the distribution function $F_T(t) = \Phi(h(t))$ is well-defined, and it is represented in Figure~\ref{define-F-ex1}.
	In both plots the median failure time $t_{0.5} = 1.583$ is indicated by a dashed vertical line.
	Moreover, as $h(0) = - 14.03$ and $\lim_{t \to \infty} h(t) = 9.67$, the range of $h$ covers all reasonable quantiles.

\begin{figure}[!tbp]
  \centering
  \begin{minipage}[b]{0.4332\textwidth}
    \includegraphics[width=\textwidth]{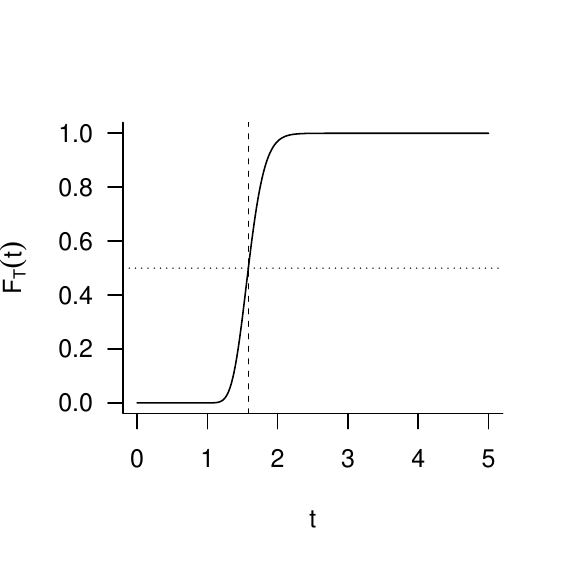}
    \caption{Failure time distribution $F_T$ for Example~\ref{ex-intro}}
\label{define-F-ex1}
  \end{minipage}
  \hfill
  \begin{minipage}[b]{0.43\textwidth}
    \includegraphics[width=\textwidth]{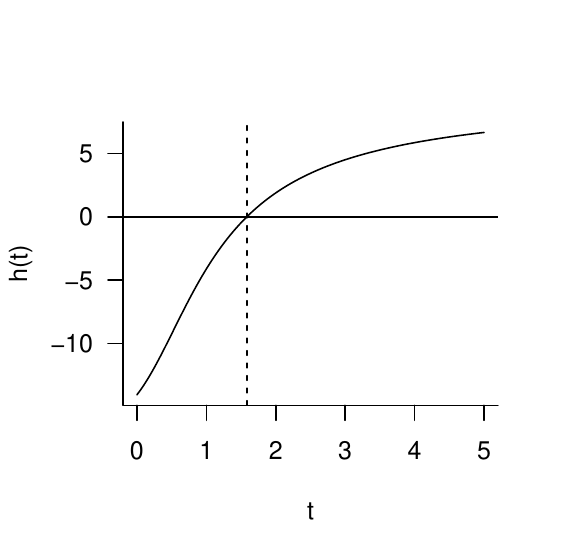}
   \caption{Defining function $h$ for Example~\ref{ex-intro}}
\label{define-h-ex1}
  \end{minipage}
\end{figure}
The marginal model for the stress variable $x$ is given by a simple linear regression, $\mathbf{f}_1(x) = (1, x)^T$.
	In this marginal model the $c$-criterion design $\xi^*$ for extrapolation of the mean response $\mu^{(1)}(x_u) = \beta_1^{(1)} + \beta_2^{(1)} x_u$ under normal use condition $x_u < 0$ is defined by $\Phi_c(\xi) = \mathbf{f}_1(x_u)^T \mathbf{M}_{1}(\xi)^{-1} \mathbf{f}_1(x_u)$. Accordingly, the $c$-optimal design $\xi^*$ assigns weight $w^* = |x_u| / (1 + 2 |x_u|)$ to the highest stress level $x_h = 1$ and weight $1 - w^* = (1 + |x_u|) / (1 + 2 |x_u|)$ to the lowest stress level $x_l = 0$ on the standardized scale $\mathcal{X} = [0, 1]$ $($see \citep{klefer1964optimum}$)$.
	Note that larger weight $1 - w^* > w^*$ is assigned to the lowest stress level $x_l = 0$ which is closer to $x_u < 0$ than $x_h = 1$ and that the weight $1 - w^*$ at $x_l$ decreases from $1$ to $1/2$, when the distance between the normal use condition and the experimental region gets larger, i.\,e.\ $x_u$ decreases.
	For the standardized value $x_u =- 0.056$ of the normal use condition from Table~\ref{tab-ex-intro-nominal-values} the optimal weights for extrapolation at $x_u =- 0.056$ are $w^* = 0.05$ at $x_h = 1$ and $1 - w^* = 0.95$ at $x_l = 0$, and the optimal design is
	\[
		\xi^* = \left(
		\begin{array}{cc}
			0 & 1
			\\
			0.95 & 0.05
		\end{array}
		\right) .
	\]
	Further examples for extrapolation at $x_{u} = - 0.4$, $ - 0.5$, and $ - 1$ give optimal weights $w^* = 0.22$, $0.25$, and $0.33$ at $x_h=1$, and $1 - w^* = 0.78$, $0.75$, and $0.67$ at $x_l=0$, respectively. By Proposition~\ref{prop-extrapolation-x} the design $\xi^*$ is also optimal for minimization of the asymptotic variance for estimating the $\alpha$-quantile $t_\alpha$ of the failure time for soft failure due to degradation under normal use condition $x_u$, when $0 < t_\alpha < \infty$ and the measurement times $t_1, ..., t_k$ are predetermined (see \citep{schwabe2014discussion} for estimation of the median, $\alpha = 0.5$).
	In particular, for the standardized value $x_u =- 0.056$ of the normal use condition from Table~\ref{tab-ex-intro-nominal-values} the optimal design for estimating any $\alpha$-quantile $t_\alpha$ assigns weight $0.95$ to $x_l = 0$ and weight $0.05$ to $x_h = 1$, as found numerically by \citep{doi:10.1002/asmb.2061} in the case of the median $t_{0.5}$.

 For $x_u<0$, the $c$-criterion at the present model
 attains its minimal value value $\Phi_c(\xi^*) = (1 + 2 |x_u|)^2$ for the optimal design $\xi^*$. 
	Common alternatives would be uniform designs $\bar\xi_m$ which assign equal weights $w = 1/m$ to $m$ experimental settings on an equidistant grid $\{x_1,...,x_m\} = \{0, 1/(m-1), ..., 1\}$ of the experimental region $\mathcal{X}=[0,1]$.
	For these designs the $c$-criterion for extrapolation at $x_u<0$ can be calculated as $\Phi_c(\bar\xi_m) = 1 + a_m (1 + 2 |x_u|)^2$, where $a_m = 3 (m - 1) / (m + 1)$.
	Their $c$-efficiency for extrapolation at $x_u$ and, hence, their efficiency for estimating the median failure time under normal use condition $x_u$ is equal to $\mathrm{eff}_c(\bar\xi_m) = \Phi_c(\xi^*) / \Phi_c(\bar\xi_m) = (1 - 1/(1 + a_m (1 + 2 |x_u|)^2))/a_m$ which increases from $1/(1 + a_m) = (m + 1) / (4 m - 2)$ for $x_u$ close to the lowest stress level $x_l=0$ to $1/a_m = (m + 1) / (3 m - 3)$ when $x_u$ tends to minus infinity.
	Moreover, for fixed $x_u$, the efficiency decreases when $m$ increases, i.\,e.\ when the grid becomes more dense. 
	For selected values of the normal use condition $x_u$ and numbers $m$ of grid points numerical values of the efficiency are reported in Table~\ref{tab-ex-intro-efficiency}
	\begin{table}
		\begin{center}
			\caption{Efficiency of uniform designs $\bar\xi_m$ for various normal use conditions $x_u$ in Example~\ref{ex-intro}}
			\label{tab-ex-intro-efficiency}
			\vspace{2mm}
			\begin{tabular}{c||c|cccc|c}
				$m$ & \multicolumn{6}{c}{$x_u$}
				\\
				& $\ \quad 0 \quad\ $ & $ - 0.056$ & $ - 0.400$ & $ - 0.500$ & $ - 1.000$ & $ \quad - \infty \quad$ 
				\\
				\hline
				$2$ & $0.50$ & $0.55$ & $0.76$ & $0.80$ & $0.90$ & $1.00$
				\\
				$3$ & $0.40$ & $0.43$ & $0.55$ & $0.57$ & $0.62$ & $0.67$
				\\
				$4$ & $0.36$ & $0.38$ & $0.47$ & $0.49$ & $0.52$ & $0.56$
				\\
				$5$ & $0.33$ & $0.36$ & $0.43$ & $0.44$ & $0.47$ & $0.50$
				\\
				\hline
				$\infty$ & $0.25$ & $0.26$ & $0.30$ & $0.31$ & $0.32$ & $0.33$
			\end{tabular} 
		\end{center}
	\end{table}
	Note that in Table~\ref{tab-ex-intro-efficiency} the row $m=\infty$ corresponds to a continuous uniform design as an approximation to large numbers $m$ of grid points, while the columns $x_u = 0$ and $x_u = - \infty$ give approximations for normal use conditions $x_u$ close to the lowest experimental stress level or far away, respectively.
	
	For the particular case $m = 2$, where the design $\bar{\xi}_2$ assigns equal weights $w = 1 - w = 1/2$ to both the highest and the lowest stress level $x_h = 1$ and $x_l = 0$, we have $a_2 = 1$ and, hence, $\Phi_c(\xi_0) = 1 + (1 + 2 |x_u|)^2$ for the $c$-criterion.
	The $c$-efficiency of $\bar{\xi}_2$ for extrapolation at $x_u$ and, thus, its efficiency for estimating the median failure time under normal use condition $x_u$ is equal to $\mathrm{eff}_c(\bar{\xi}_2) = 1 - 1 / (1 + (1 + 2 |x_u|)^2)$ which ranges from $1/2$ for $x_u$ close to the lowest stress level $x_l=0$ to $1$ when $x_u$ tends to minus infinity.
	
	For the nominal value $x_u = - 0.056$ of the normal use condition in Table~\ref{tab-ex-intro-nominal-values} the efficiency of the equidistant grid designs $\bar{\xi}_m$ is reported in the third column of Table~\ref{tab-ex-intro-efficiency}.
	In particular, for the uniform design $\bar{\xi}_2$ on the endpoints of the experimental region this efficiency is $0.55$ which means that $\mathrm{eff}_c(\bar{\xi}_2)^{-1} - 1 = 1/(1 + 2 |x_u|)^2 = 81\,\%$ more units would have to be used for design $\bar{\xi}_2$ to obtain the same quality for estimating the median failure time than for the optimal design $\xi^*$.
\end{example}


In the following we will consider thoroughly a more complex example, where two stress variables are involved under the virtual nominal values for the parameters, normal use conditions and threshold given in Table~\ref{tab-ex-2-nominal-values}.

 It has to be noted that in the case of the standardized nominal values of Table~\ref{tab-ex-intro-nominal-values} in Example~\ref{ex-intro} for high stress levels ($x_h = 1$) the mean degradation path $\mu_i$ exceeds the threshold $y_0$ for soft failure due to degradation with high probability ($\mathrm{P}(\mu_i(1, 0) \geq y_0) > 1 / 2$) already at the initial experimental time $t_{\min} = 0$.
Hence, care has to be taken that the model equation for the mean degradation paths is also valid beyond the threshold, i.\,e.\ in the case that soft failure has already occurred.
To avoid this complication we consider in Example~\ref{ex-2} nominal values which guarantee that soft failure occurs during the experiment only with negligible probability.

\begin{example}
	\label{ex-2}
In this example the degradation is influenced by two standardized accelerating stress variables $x_1$ and $x_2$ which act linearly on the response with a potential interaction effect associated with $x_1 x_2$.
The two stress variables $x_1$ and $x_2$ can be chosen independently from marginal design regions  $\mathcal{X}_1 = \mathcal{X}_2 = [0, 1]$, respectively. 
Also the time is assumed to act linearly on the degradation and all interactions between stress variables and time are present as in Example~1.
\begin{table}
	\begin{center}
		\caption{Nominal values for Example~\ref{ex-2}}
		\label{tab-ex-2-nominal-values}
		\vspace{2mm}
		\begin{tabular}{c|c|c|c|c|c|c|c|c||c|c||c|c||c}
			& $\beta_{1}$ & $\beta_2$ & $\beta_3$ & $\beta_4$ 
			& $\beta_{5}$ & $\beta_6$ & $\beta_7$ & $\beta_8$ 
			& $\sigma_\gamma$ & $\sigma_\varepsilon$ 
			& ${x}_{u 1}$ & ${x}_{u 2}$ 
			& $y_0$
			\\
			$f_j(\mathbf{x}, t)$ 
			& $1$ & $x_1$ & $x_2$ & $x_1 x_2$ 
			& $t$ & $x_1 t$ & $x_2 t$ & $x_1 x_2 t$ & & & & &
			\\
			\hline
			& $4.0$ & $1.5$ & $0.75$ & $1.8$ 
			& $0.5$ & $0.25$ & $0.25$ & $4.03$ 
			& $0.7$ & $0.85$ 
			& $ - 0.5$ & $ - 0.4$ 
			& $14.39$
		\end{tabular} 
	\end{center}
\end{table}
If, for testing unit $i$, the stress variables are set to $x_{i 1}$ and $x_{i 2}$ the response $y_{i j}$ at time $t_j$ is given by
\begin{equation} 
\label{example2_1}
 y_{i j} = \beta_{i, 1} + \beta_2 x_{i 1} + \beta_3 x_{i 2} + \beta_4 x_{i 1} x_{i 2} + \beta_{i, 5} t_j + \beta_6 x_{i 1} t_j + \beta_7 x_{i 2} t_j + \beta_8 x_{i 1} x_{i 2} t_j + \varepsilon_{i j} ,
\end{equation}
where the intercept $\beta_{i, 1}$ is the mean degradation of unit $i$ at time $t = 0$ under the stress levels $x_1 = 0$ and $x_2 = 0$, $\beta_{2}$ is the common (not unit specific) mean increase in degradation depending on the stress variable $x_1$ when $x_2 = 0$, $\beta_{3}$ is the common mean increase in degradation depending on the stress variable $x_2$ when $x_1 = 0$, and $\beta_{4}$ is the interaction effect between the two stress variables.
Accordingly $\beta_{i, 5}$ is the mean increase in degradation of unit $i$ over time $t$ when the stress levels are set to $x_1 = 0$ and $x_2 = 0$, $\beta_{6}$ is the interaction effect between  time and the stress variable $x_1$ when $x_2 = 0$, $\beta_{7}$ is the interaction effect between  time and the stress variable $x_2$ when $x_1 = 0$, and $\beta_{8}$ is the second-order interaction effect between time and the two stress variables.
Also here only the parameters $\beta_{i,1}$ and $\beta_{i,5}$ associated with the constant term in the stress variables may vary across units.
On the aggregate level these two unit parameters are again assumed to be normally distributed with means $\mathrm{E}(\beta_{i, 1})=\beta_{1}$ and $\mathrm{E}(\beta_{i, 5})=\beta_{5}$ and variance covariance matrix $\BS{\Sigma} = \left(\begin{array}{cc} \sigma^{2}_{1} & \rho \sigma_{1} \sigma_{2}  \\ \rho\sigma_{1} \sigma_{2} & \sigma_{2}^{2} \end{array}\right)$.
After rearranging terms and relabeling the parameters the model can be rewritten as
\begin{equation} 
	\label{example2_1-formal}
	Y_{i j} = (\mathbf{f}_{1 1}(x_{i 1}) \otimes \mathbf{f}_{1 2}(x_{i 2}) \otimes \mathbf{f}_2(t_j))^{T} \BS{\beta} + \mathbf{f}_2(t_j)^{T} \BS\gamma_i + \varepsilon_{i j},
\end{equation}
where $\mathbf{f}_{11}(x_1) = (1, x_1)^T$, $\mathbf{f}_{12}(x_2) = (1, x_2)^T$ and $\mathbf{f}_2(t) = (1, t)^T$ are the marginal regression functions for the stress variables $x_1$, $x_2$ and the time variable $t$, respectively, $\BS\beta = (\beta_{1 1 1}, \beta_{1 1 2}, \beta_{1 2 1}, \beta_{1 2 2}, \beta_{2 1 1}, \beta_{2 1 2},\\ \beta_{2 2 1}, \beta_{2 2 2})^T = (\beta_{1}, \beta_{5}, \beta_{3}, \beta_{7}, \beta_{2}, \beta_{6}, \beta_{4}, \beta_{8})^T$ is the rearranged vector of aggregate parameters, and $\BS\gamma_i = (\gamma_{i 1}, \gamma_{i 2})^T = (\beta_{i,1} - \beta_{1}, \beta_{i,5} - \beta_{5})^T$ is the vector of parameters for the deviations of unit $i$ from the aggregate values.
These deviations constitute again random effects with zero mean and variance covariance matrix $\BS\Sigma$.
With $\mathbf{x} = (x_1, x_2)$, $\mathcal{X} = \mathcal{X}_1 \times \mathcal{X}_2 = [0,1]^2$,  $\mathbf{f}_1(\mathbf{x}) = (\mathbf{f}_{1 1}(x_1)^T, \mathbf{f}_{1 2}(x_2)^T)^T$, $p_1 = 4$, $p_2 = 2$, and $p = 8$ model~\ref{example2_1-formal} fits into the framework of the general product-type model~\ref{modelindividuallevel-product}.

	The aggregate degradation path $\mu(t)$ is a straight line with intercept $\delta_1 = \beta_1 + \beta_2 x_{u 1} + \beta_3 x_{u 2} + \beta_4 x_{u 1} x_{u 2} = 3.31$ and slope $\delta_2 = \beta_5 + \beta_6 x_{u 1} + \beta_7 x_{u 2} + \beta_8 x_{u 1} x_{u 2} = 1.08$, i.\,e.\ $\mu(t) = 3.31 + 1.08 t$ under normal use condition.
	With a threshold of $y_0 = 14.39$ for soft failure the median failure time results in $t_{0.5} = (y_0 - \delta_1) / \delta_2 = 10.25$ which is substantially larger than the maximal experimental time $t_{\max} = 1$.
	For the characterization of other quantiles the function $h(t)$ is plotted in Figure~\ref{define-h-ex2} together with the corresponding distribution function $F_T(t)=\Phi(h(t))$ in Figure~\ref{define-F-ex2}.
	The median failure time $t_{0.5} = 10.25$ is indicated in both plots by a dashed vertical line.
	As $\rho = 0$ the function $h(t)$ is strictly increasing and ranges from $h(0) = - 15.83$ to $h_{\max} = \lim_{t\to\infty} h(t) = 1.54$. 
	Thus, quantiles $t_\alpha$ are non-degenerate as long as $\alpha \leq \alpha_{\max}$, where $\alpha_{\max} = \Phi(h_{\max}) = 0.939$, and $(1 - \alpha_{\max}) \cdot 100 = 6.1$ percent of the mean degradation paths do not lead to a soft failure.
	Both $\alpha_{\max}$ and $h_{\max}$ are indicated in the respective plots by a dashed horizontal line.
\begin{figure}[!tbp]
  \centering
  \begin{minipage}[b]{0.44\textwidth}
    \includegraphics[width=\textwidth]{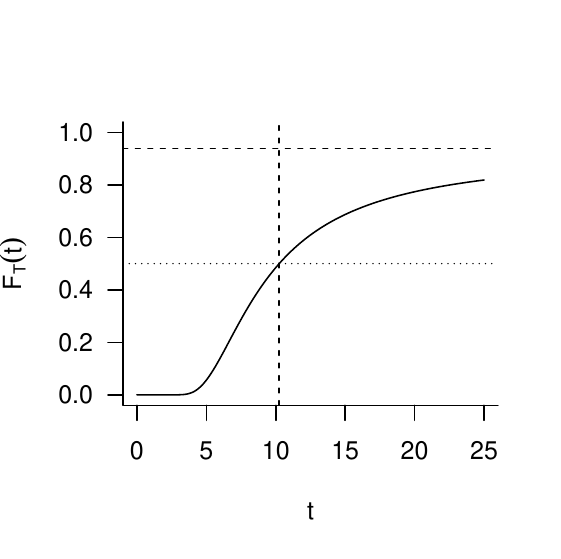}
    \caption{Failure time distribution $F_T$ for Example~\ref{ex-2}}
\label{define-F-ex2}
  \end{minipage}
  \hfill
  \begin{minipage}[b]{0.425\textwidth}
    \includegraphics[width=\textwidth]{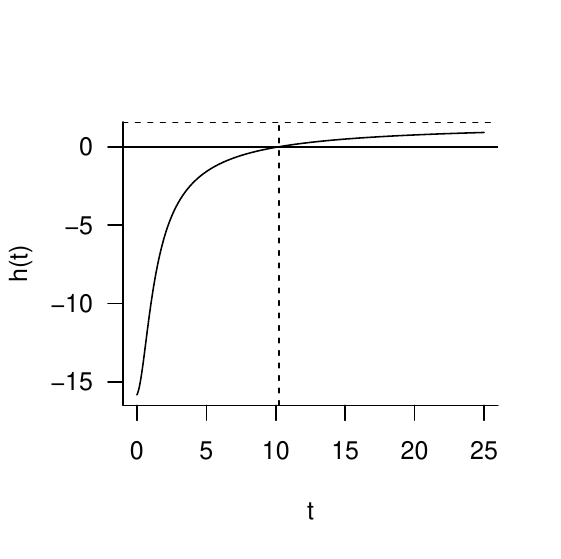}
   \caption{Defining function $h$ for Example~\ref{ex-2}}
\label{define-h-ex2}
  \end{minipage}
\end{figure}
	Note that for the nominal values of Table~\ref{tab-ex-2-nominal-values} the mean degradation $\mu(\mathbf{x},t)$ under experimental conditions attains its maximum $13.08$ for the maximal stress levels ($x_1 = x_2 = 1$) and maximal experimental time ($t = 1$), and, hence, the mean degradation paths do not exceed the threshold for all experimental settings.

The marginal model for the present combined stress variable $\mathbf{x} = (x_1, x_2)$ is given itself by a product-type structure, $\mathbf{f}_1(\mathbf{x}) = \mathbf{f}_{1 1}(x_1) \otimes \mathbf{f}_{1 2}(x_2)$, where both components $x_1$ and $x_2$ are specified by $\mathbf{f}_{1 v}(x_v) = (1,x_v)^T$ as simple linear regressions in their corresponding submarginal models $Y_{i}^{(1 v)} = \beta_1^{(1 v)} + \beta_2^{(1 v)} x_v + \varepsilon_i^{(1 v)}$, $v = 1, 2$, with standardized homoscedastic and uncorrelated error terms.
	Moreover, the experimental region $\mathcal{X} = [0, 1]^2$ for the combined stress variable $\mathbf{x}$ is the Cartesian product of the marginal experimental regions $\mathcal{X}_1 = \mathcal{X}_2 = [0, 1]$ for the components $x_1$ and $x_2$, respectively.
	The vector $\mathbf{c}$ for extrapolation of the mean response $\mathbf{c}^T\BS\beta^{(1)} = \mu^{(1)}(\mathbf{x}_u) = \beta_1^{(1)} + \beta_2^{(1)} x_{u 1} + \beta_3^{(1)} x_{u 2} + \beta_4^{(1)} x_{u 1}x_{u 2}$ under normal use condition $\mathbf{x}_u=(x_{u 1},x_{u 2})$, $x_{u 1},x_{u 2}<0$, is given by $\mathbf{f}_1(\mathbf{x}_u) = \mathbf{f}_{1 1}(x_{u 1}) \otimes \mathbf{f}_{1 2}(x_{u 2})$ and, hence, also factorizes as $\mathbf{c} = \mathbf{c}_1 \otimes \mathbf{c}_2$, where $\mathbf{c}_v = \mathbf{f}_{1 v}(x_{u v})$.
	In this setting the $c$-optimal design $\xi^*$ for extrapolation at $\mathbf{x}_u$ can be obtained as the product $\xi^* = \xi_1^* \otimes \xi_2^*$ of the $c$-optimal designs $\xi_v^*$  for extrapolation at $x_{u v}$ in the submarginal models (see Theorem~4.4 in \citep{schwabe1996optimum}). 
	
	The submarginal $c$-optimal designs $\xi_v^*$ can be derived as in Example~\ref{ex-intro}.
	They assign weight $w_v^* = |x_{u v}| / (1 + 2|x_{u v}|)$ to $x_v = 1$ and weight $1 - w_v^* = (1 + |x_{u v}|) / (1 + 2|x_{u v})$ to $x_v = 0$.
	Hence, the $c$-optimal design $\xi^* = \xi_1^* \otimes \xi_2^*$ for extrapolation at $\mathbf{x}_u$ is given by
	\[
	\xi^* = \left(
		\begin{array}{cccc}
			(0, 0) & (0, 1) & (1, 0) & (1, 1)
			\\
			(1 - w_1^*) (1 - w_2^*) & (1 - w_1^*) w_2^* & w_1^* (1 - w_2^*) & w_1^* w_2^* 
		\end{array}
	\right) .
	\]
	
	Then, by Proposition~\ref{prop-extrapolation-x}, the design $\xi^*$ is also optimal for minimization of the asymptotic variance for estimating the $\alpha$-quantile $t_\alpha$ of the failure time for soft failure due to degradation, when $0 < t_\alpha < \infty$ and the measurement times $t_1, ..., t_k$ are predetermined.
	For example, when the normal use conditions are ${x}_{u 1} = - 0.5$ for the first component and ${x}_{u 2} = - 0.4$ for the second component as specified in Table~\ref{tab-ex-2-nominal-values}, then by the results in Example~\ref{ex-intro} the optimal marginal weights are $w_1^* = 0.25$ and $w_2^* = 0.22$, and the optimal design $\xi^* = \xi_1^* \otimes \xi_2^*$ is given by
	\[
	\xi^* = \left(
		\begin{array}{cccc}
			(0, 0) & (0, 1) & (1, 0) & (1, 1)
			\\
			0.58 & 0.17 & 0.19 & 0.06 
		\end{array}
		\right) .
	\]

The corresponding $c$-criterion $\Phi_c(\xi) = \mathbf{f}_1(\mathbf{x}_u)^T \mathbf{M}_{1}(\xi)^{-1} \mathbf{f}_1(\mathbf{x}_u)$ for extrapolation at $\mathbf{x}_u = (x_{u 1}, x_{u 2})$ factorizes into its counterparts in the submarginal models, $\Phi_c(\xi) = \mathbf{f}_{1 1}({x}_{u 1})^T \mathbf{M}_{1 1}(\xi_1)^{-1}  \mathbf{f}_{1 1}({x}_{u 1}) \cdot \mathbf{f}_{1 2}({x}_{u 2})^T \mathbf{M}_{1 2}(\xi_2)^{-1}  \mathbf{f}_{1 2}({x}_{u 2})$.
	Because also the $c$-optimal design $\xi^* = \xi_1^* \otimes \xi_2^*$ has product-type structure, the $c$-efficiency for extrapolation at $\mathbf{x}_u$ and, hence, the efficiency for estimating the median failure time factorizes, $\mathrm{eff}_c(\xi_1\otimes\xi_2) = \mathrm{eff}_{c 1}(\xi_1) \cdot \mathrm{eff}_{c 2}(\xi_2)$, where $\mathrm{eff}_{c v}(\xi_v)$ is the corresponding efficiency in the $v$th submarginal model, $v = 1, 2$.
	
	The design $\bar\xi$ which assigns equal weights $1/4$ to the four vertices $(0, 0)$, $(0, 1)$, $(1, 0)$, and $(1, 1)$ of the experimental region serves as a natural standard design.
	This design can be seen to be the product $\bar\xi = \bar\xi_2 \otimes \bar\xi_2$ of submarginal designs $\bar{\xi}_2$ which assign equal weights $1 / 2$ to the lowest and highest stress level $x_{v l}$and $x_{v h}$  in the submarginal models, $v = 1, 2$.
	Hence, from Example~\ref{ex-intro} we get the efficiency of $\bar\xi$ as $\mathrm{eff}_c(\bar\xi) = ((1 + 2 |x_{u 1}|)(1 + 2 |x_{u 2}|))^2/((1 + (1 + 2 |x_{u 1}|)^2)(1 + (1 + 2 |x_{u 2}|)^2))$ which ranges from $1/4$ for $\mathbf{x}_u$ close to the combination $(x_{1 l}, x_{2 l})$ of lowest stress levels $x_{1 l} = 0$ and $x_{2 l} = 0$ to $1$ when both normal use conditions $x_{u 1}$  and $x_{u 2}$ tend to minus infinity.

	For example, when the normal use conditions are ${x}_{u 1} = - 0.5$ for the first component and ${x}_{u 2} = - 0.4$ for the second component as specified in Table~\ref{tab-ex-2-nominal-values}, then according to Table~\ref{tab-ex-intro-efficiency} the efficiency $\mathrm{eff}_{c v}(\bar\xi_2)$ of $\bar{\xi}_2$ is $0.80$ and $0.76$ in the respective submarginal models, $v = 1, 2$.
	By the above considerations the efficiency of $\bar\xi$ is $\mathrm{eff}_c(\bar{\xi}) = 0.80 \cdot 0.76 = 0.61$.
	This means that $\mathrm{eff}_c(\bar{\xi})^{-1} - 1 = 0.39 / 0.61 = 64\,\%$ more units have to be used for design $\bar{\xi}$ to obtain the same quality for estimating the median failure time than for the optimal design $\xi^*$.
	Hence, the optimal design $\xi^*$ performs much better than the standard design $\bar\xi$ in this situation.	
\end{example}

In Example \ref{ex-3-additive} we will use Elfving's theorem to characterize optimal designs for the situation with two non-interacting stress variables. Further details in regards to Elfving's theorem are deffered the Appendix.

\begin{example}
	\label{ex-3-additive}
	In the case of two non-interacting stress variables $x_1$ and $x_2$ we consider the model equation	
	\begin{equation} 
		\label{example3_1}
		y_{i j} = \beta_{i, 1} + \beta_2 x_{i 1} + \beta_3 x_{i 2}  + \beta_{i, 4} t_j + 	\beta_5 x_{i 1} t_j + \beta_6 x_{i 2} t_j+ \varepsilon_{i j} 
	\end{equation}
	for the combined stress variable $\mathbf{x}=(x_1,x_2)$ and the time variable $t$.
	This model contains all terms of the full interaction model~\ref{example2_1}) with the exception of the terms $x_1 x_2$ and $x_1 x_2 t$ related to potential interactions between the stress variables.
	The interpretation of all other terms in (\ref{example3_1}) is the same as in Example~\ref{ex-2}.
	Model~(\ref{example3_1}) is constructed from the marginal model  
	\[
		Y_{i}^{(1)} = \beta_1^{(1)} + \beta_2^{(1)} x_{i 1} + \beta_3^{(1)} x_{i 2} + \varepsilon_i^{(1)} 
	\]
	which is additive in the effects of the stress variables $x_1$ and $x_2$,
	i.\,e.\ $\mathbf{f}_1(\mathbf{x}) = (1, x_1, x_2)^T$. 
	
	For this marginal model of multiple regression, the Elfving set is an oblique prism with quadratic base with vertices $(1, 0, 0)^T$, $(1, 0, 1)^T$, $(1, 1, 0)^T$, $(1, 1, 1)^T$ and quadratic top with vertices $( - 1, 0, 0)^T$, $( - 1, 0, - 1)^T$, $( - 1, - 1, 0)^T$, $( - 1, - 1, - 1)^T$.
	To find the $c$-optimal extrapolation design at the normal use condition $\mathbf{x}_u=(x_{u 1},x_{u 2})$ by Elfving's theorem we have to determine the intersection point of the ray $\lambda(1, x_{u 1}, x_{u 2})^T$ with the surface of the Elfving set.
	For $x_{u 1} < x_{u 2} < 0$ the ray intersects the surface at the quadrangular face of the prism spanned by $(1, 0, 0)^T$, $(1, 0, 1)^T$, $( - 1, - 1, 0)^T$, and $( - 1, - 1, - 1)^T$ when $\lambda_c = 1 / (1 + 2 |x_{u 1}|)$.
	The representation of the intersection point $\lambda_c(1,x_{u 1},x_{u 2})^T$ by the vertices of the quadrangle is not unique.
	There are two $c$-optimal designs
	\[
		\xi_0^* = \left(
		\begin{array}{ccc}
			(0, 0) & (0, 1) & (1, 1)
			\\
			(1 + |x_{u 2}|) \lambda_c & (|x_{u 1}| - |x_{u 2}|) \lambda_c & |x_{u 1}| \lambda_c
		\end{array}
		\right)	
	\]
	and
	\[
		\xi_1^* = \left(
		\begin{array}{ccc}
			(0, 0) & (1, 0) & (1, 1)
			\\
			(1 + |x_{u 1}|) \lambda_c & (|x_{u 1}| - |x_{u 2}|) \lambda_c & |x_{u 2}| \lambda_c
		\end{array}
		\right)	
	\]
	which are supported on three vertices.
	As a consequence, also for all coefficients $a$, $0 < a < 1$, the convex combination
\begin{align*}
	& \xi_{a}^* = (1 - a) \xi_0^* + a \xi_1^* =
		\\
		& \left(	
		\begin{array}{cccc}
			(0, 0) & (0, 1)  & (1, 0) & (1, 1)
			\\
			(1 + a |x_{u 1}| + (1 - a) |x_{u 2}|) \lambda_c & (1 - a) (|x_{u 1}| - |x_{u 2}|) \lambda_c & a (|x_{u 1}| - |x_{u 2}|) \lambda_c & ((1 - a) |x_{u 1}| + a |x_{u 2}|) \lambda_c
		\end{array}
		\right)	
\end{align*}	
supported on all four vertices is $c$-optimal for extrapolation at $\mathbf{x}_u$, $0 < t_\alpha < \infty$.
	Then, by Proposition~\ref{prop-extrapolation-x}, the designs $\xi_{a}^*$ are also optimal for minimization of the asymptotic variance for estimating the $\alpha$-quantile $t_\alpha$ of the failure time for soft failure due to degradation, when $0 < t_\alpha < \infty$ and the measurement times $t_1, ..., t_k$ are predetermined, $0 \leq \alpha \leq 1$.
	For example, when the normal use conditions are $x_{u 1} = - 0.5$ for the first component and $x_{u 2} = - 0.4$ for the second component as in Example~\ref{ex-2}, then the optimal design $\xi_{a}^*$ is given by
	\[
		\xi_{a}^* = \left(
		\begin{array}{cccc}
			(0, 0) & (0, 1) & (1, 0) & (1, 1)
			\\
			0.70 + 0.05 a & 0.05 - 0.05 a & 0.05 a & 0.25 - 0.05 a
		\end{array}
		\right) 
	\]
	with the special cases
	\[
		\xi_0^* = \left(
		\begin{array}{ccc}
			(0, 0) & (0, 1) & (1, 1)
			\\
			0.70 & 0.05 & 0.25
		\end{array}
		\right)	
		\qquad \mathrm{and} \qquad
		\xi_1^* = \left(
		\begin{array}{ccc}
			(0, 0) & (1, 0) & (1, 1)
			\\
			0.75 & 0.05  & 0.20
		\end{array} 
		\right)	
	\]
	supported on three vertices.
	
	Note that there are also other designs which are $c$-optimal for extrapolation at $\mathbf{x}_u$, but which are not solely supported on the vertices. 
	For example, for $x_{u 1} < x_{u 2} < 0$ the two-point design which assigns weight $w = |x_{u 1}| / (1 + 2 |x_{u 1}|)$ to $(0, 0)$ and weight $1 - w = (1 + |x_{u 1}|) / (1 + 2 |x_{u 1}|)$ to $(1, x_{u 2} / x_{u 1})$ is $c$-optimal by Elfving's theorem.
	However, these designs can be used for estimating $t_\alpha$ by means of maximum-likelihood only when the resulting information matrix is non-singular, i.\,e.\ when the design has, at least, three distinct support points.
	
	For $x_{u 2} < x_{u 1} < 0$ optimal designs can be obtained from the above case by interchanging the roles of the two components $x_1$ and $x_2$.
	
	In the case $x_{u 1} = x_{u 2} < 0$ there is only one $c$-optimal design for extrapolation.
	This design is supported on two vertices and assigns weight $w = |x_{u 1}|/(1 + 2 |x_{u 1}|)$ to $(0,0)$ and weight $1 - w = (1 + |x_{u 1}|)/(1 + 2 |x_{u 1}|)$ to $(1,1)$.
	As the resulting information matrix is singular, this design cannot be used for estimating the $\alpha$-quantile $t_\alpha$ of the failure time for soft failure due to degradation. 
	Hence, no suitable optimal design exists in this case, but the $c$-optimal design may serve as a benchmark for judging the quality of a competing design in terms of efficiency.

the value of the $c$-criterion for the locally $c$-optimal design $\xi^*$ for extrapolation at $\mathbf{x}_u = (x_{u 1}, x_{u 2})$, $x_{u 1} < x_{u 2} < 0$, is given by $\Phi_c(\xi^*) = 1 / \lambda_c^2 = (1 + 2 |x_{u 1}|)^2$ as seen before.
	The uniform design $\bar\xi$ which assigns equal weights $1/4$ to the four vertices $(0, 0)$, $(0, 1)$, $(1, 0)$, and $(1, 1)$ of the experimental region has a value of $\Phi_c(\bar\xi) = 1 + (1 + 2 |x_{u 1}|)^2 + (1 + 2 |x_{u 2}|)^2$.
	Hence, the uniform design $\bar\xi$ has efficiency $\mathrm{eff}_c(\bar\xi) = ((1 + 2 |x_{u 1}|)^2 / (1 + (1 + 2 |x_{u 1}|)^2 + (1 + 2 |x_{u 2}|)^2)$ which ranges from $1 / 3$ for $\mathbf{x}_u$ close to the combination $(x_{1 l}, x_{2 l})$ of lowest stress levels $x_{1 l} = 0$ and $x_{2 l} = 0$ to $1$ when the lower normal use condition $x_{u 1}$ tends to minus infinity while $x_{u 2}$ remains fixed.
	Moreover, the efficiency approaches $1 / 2$ when $x_{u 2} \approx x_{u 1}$ and both normal use conditions tend to minus infinity simultaneously.
	
	For example, when the normal use conditions are ${x}_{u 1} = - 0.5$ for the first component and ${x}_{u 2} = - 0.4$ for the second component as specified in Table~\ref{tab-ex-2-nominal-values}, then the values of the $c$-criterion are $\Phi_c(\xi^*) = 4.00$ for the optimal design $\xi^*$ and $\Phi_c(\bar\xi) = 8.24$ for the uniform design $\bar\xi$, respectively.
	Hence, the efficiency of the uniform design $\bar{\xi}$ is $\mathrm{eff}_{c}(\bar\xi) = 4.00 / 8.24 = 0.49$.
	This means that more than twice as many units have to be used for design $\bar{\xi}$ to obtain the same quality for estimating the median failure time than for the optimal design $\xi^*$.
	This highlights that the optimal design $\xi^*$ performs substantially better than the standard design $\bar\xi$ in the current model of two non-interacting stress variables.
\end{example}

\section{Discussion and conclusion}
\label{sec-discussion}

During the design stage of highly reliable systems it is extremely important to assess the reliability related properties of the product. 
One method to handle this issue is to conduct accelerated degradation testing. 
Accelerated degradation tests have the advantage to provide an estimation of lifetime and reliability of the system under study in a relatively short period of time. 
To account for variability between units in accelerated degradation tests, it is assumed that the degradation function can be described by a mixed-effects linear model.
This also leads to a non-degenerate distribution of the failure time, due to soft failure by exceedance of the expected (conditionally per unit) degradation path over a threshold, under normal use conditions.
Therefore it is desirable to estimate certain quantiles of this failure time distribution as a characteristic of the reliability of the product.
In this context we discussed the existence of non-degenerate solutions for the quantiles.
The purpose of optimal experimental design is then to find the best settings for the stress variable and/or the time variable to obtain most accurate estimates for these quantities.

In the present model for accelerated degradation testing, it is further assumed that stress remains constant within each testing unit during the whole period of experimental measurements but may vary between units.
Hence, in the corresponding experiment a cross-sectional design between units has to be chosen for the stress variable while for repeated measurements the time variable varies according to a longitudinal design within units.

In the present paper we assumed a model with complete interactions between the time and the stress variables and random effects only associated with time but not with stress.
Then the cross-sectional design for the stress variables and the longitudinal design for the time variable can be optimized independently, and the resulting common optimal design can be generated as the cross-product of the optimal marginal designs for stress and time, respectively.
In particular, the same time plan for measurements can be used for all units in the test.
Moreover, the marginal optimal design for the stress variables can be chosen independently of any model parameters.
Optimal time plans may depend on the aggregate location parameters via the median failure time, but do not depend on which quantile of the failure distribution is to be estimated.
These results were extended to a model of destructive testing in which also the time variable has to be chosen cross-sectionally.
There the optimal choice of measurement times may also be affected by the variance covariance parameters of the random effects.
In both cases (longitudinal and cross-sectional time settings) the efficiency of the designs considered factorizes which facilitates to assess their performance when the nominal values for these parameters are misspecified at the design stage.

Finding optimal designs may become more complicated when the above assumptions are not met.
In particular, the designs for stress and time variables may no longer be optimized independently if there are only additive effects in the model (lacking interaction terms $x t$, cf.\ Example~\ref{ex-3-additive} for a similar situation in the marginal stress model) or when also the stress variables are accompanied by random effects. 
The impact of these deviations from the model assumptions on optimal designs are object of further research as well as the construction of designs which are robust against misspecification of the nominal parameters, such as maximin efficient or weighted (``Bayesian'') optimal designs.
Of further interest would be to consider optimality criteria accounting for simultaneous estimation of various characteristics of the failure time distribution. 
\renewcommand{\thesection}{\Alph{section}}
\setcounter{section}{0}

\section{Appendix: Elfving's theorem}
\label{Elv.theor.}
The $c$-optimal extrapolation designs can be obtained in both Examples~\ref{ex-intro} and \ref{ex-2} by Elfving's theorem (\citep{elfving1952}) which provides a geometrical construction of a $c$-optimal design (see \citep{schwabe1996optimum}, Theorem~2.13).
To give a rough idea of this construction one has to consider the Elfving set which is the convex hull 
\[
\mathcal{E} = \mathrm{conv}(\{\mathbf{f}(\mathbf{x});\, \mathbf{x} \in \mathcal X\} \cup \{ - \mathbf{f}(\mathbf{x});\, \mathbf{x} \in \mathcal X\})
\] 
of the union of the so-called induced design region $\{\mathbf{f}(\mathbf{x});\, \mathbf{x}\in \mathcal X\}$ and its image $\{-\mathbf{f}(\mathbf{x});\, \mathbf{x}\in \mathcal X\}$ under reflection at the origin $\mathbf{0}$ in $\mathbb{R}^p$.
Here $\mathbf{x}$ and $\mathbf{f}$ denote variables and regression functions associated with a generic model $Y_i = \mathbf{f}(\mathbf{x}_i)^T \BS\beta + \varepsilon_i$.
In Example~\ref{ex-intro} we have $\mathbf{x} = x$, $\mathbf{f}(\mathbf{x}) = (1, x)^T$ and $\mathcal{X} = [0, 1]$, and the Elfving set $\mathcal{E}$ is given as a parallelogram in $\mathbb{R}^2$ with one edge from $(1, 0)^T$ to $(1, 1)^T$ representing the induced design region and the opposite edge from $( - 1, 0)^T$ to $( - 1, - 1)^T$ representing its image under reflection.

The $c$-optimal design for estimating $\mathbf{c}^T\BS\beta$ can then be constructed as follows:
Determine the intersection point of the ray $\lambda\mathbf{c}$, $\lambda>0$, with the boundary of the Elfving set, $\lambda_c \mathbf{c}$ say.
This point can be represented as a convex combination of (extremal) points $\pm \mathbf{f}(\mathbf{x}_i)$ of the induced design region and its reflection,
\begin{equation*}
	\lambda_c \mathbf c = \sum_{i = 1}^m w_i z_i \mathbf{f}(\mathbf{x}_i) ,
	\label{eq-th-elfving}
\end{equation*}
where $z_i = 1$, when the (extremal) point $\mathbf{f}(\mathbf{x}_i)$ is from the induced design region, and $z_i = - 1$, when the point $-\mathbf{f}(\mathbf{x}_i)$ is from the reflection, and the weights $w_i$ of the convex combination satisfy $w_i > 0$ and $\sum_{i = 1}^m w_i = 1$.
Then Elfving's theorem states that the design $\xi^*$ which assigns weights $w_i$ to the settings $\mathbf{x}_i$ is $c$-optimal (for $\mathbf{c}$).
Moreover, this construction also provides the value of the $c$-criterion, $\mathbf{c}^T \mathbf{M}
(\xi^*)^{-1} \mathbf{c} = 1 / \lambda_c^2$.
In Example~\ref{ex-intro} the ray $\lambda(1, x_u)^T$ intersects the boundary of the Elfving set $\mathcal{E}$ at the connecting line from $(1, 0)^T=\mathbf{f}(0)$ to $( - 1, - 1)^T = - \mathbf{f}(1)$ at  $\lambda_c (1, x_u)^T = w_1( - 1, - 1)^T + w_2(1, 0)^T$ with $\lambda_c = 1/(1 - 2 x_u)$, $w_1= - x_u / (1 - 2 x_u) > 0$ and $w_2 = 1 - w_1 = (1 - x_u) / (1 - 2 x_u) > 0$.
 Hence, the optimality of the given design follows.

\section*{Acknowledgement}
The work of the first author has been supported by the German Academic Exchange Service (DAAD) under grant no.~2017-18/ID-57299294.

\section*{References}
 \bibliographystyle{ieeetr}
  \bibliography{ReferencesLMEM}

\end{document}